\newtheorem{theorem}{Theorem}
\newtheorem{lemma}{Lemma}
\begin{document}

\title{Stroboscopic quantum nondemolition measurements for enhanced entanglement generation between atomic ensembles}

\author{Manish Chaudhary}
\affiliation{State Key Laboratory of Precision Spectroscopy, School of Physical and Material Sciences, East China Normal University, Shanghai 200062, China}
\affiliation{New York University Shanghai, 1555 Century Ave, Pudong, Shanghai 200122, China}

\author{Yuping Mao}
\affiliation{State Key Laboratory of Precision Spectroscopy, School of Physical and Material Sciences, East China Normal University, Shanghai 200062, China}
\affiliation{New York University Shanghai, 1555 Century Ave, Pudong, Shanghai 200122, China}

\author{Manikandan Kondappan}
\affiliation{State Key Laboratory of Precision Spectroscopy, School of Physical and Material Sciences, East China Normal University, Shanghai 200062, China}
\affiliation{New York University Shanghai, 1555 Century Ave, Pudong, Shanghai 200122, China}

\author{Amiel S. P. Paz}
\affiliation{New York University Shanghai, 1555 Century Ave, Pudong, Shanghai 200122, China} 

\author{Valentin Ivannikov}
\affiliation{New York University Shanghai, 1555 Century Ave, Pudong, Shanghai 200122, China} 
\affiliation{NYU-ECNU Institute of Physics at NYU Shanghai, 3663 Zhongshan Road North, Shanghai 200062, China}

\author{Tim Byrnes}
\email{tim.byrnes@nyu.edu}
\affiliation{New York University Shanghai, 1555 Century Ave, Pudong, Shanghai 200122, China}  
\affiliation{State Key Laboratory of Precision Spectroscopy, School of Physical and Material Sciences, East China Normal University, Shanghai 200062, China}
\affiliation{NYU-ECNU Institute of Physics at NYU Shanghai, 3663 Zhongshan Road North, Shanghai 200062, China}
\affiliation{National Institute of Informatics, 2-1-2 Hitotsubashi, Chiyoda-ku, Tokyo 101-8430, Japan}
\affiliation{Department of Physics, New York University, New York, NY 10003, USA}

\date{\today}

\begin{abstract}
We develop a measurement operator formalism to  handle quantum nondemolition (QND) measurement induced entanglement generation between two atomic gases.  We first derive how the QND entangling scheme reduces to a positive operator valued measure (POVM), and consider its limiting case when it can be used to construct a projection operator that collapses the state to a total spin projection state.  We then analyze how a stroboscopic sequence of such projections made in the $x$ and $z$ basis evolves the initial wavefunction. Such a sequence of QND projections can enhance the entanglement between the atomic ensembles and makes the state converge towards a highly entangled state. We show several mathematical identities which greatly simplify the state evolution in the projection sequence, and allows one to derive the exact state in a highly efficient manner. Our formalism does not use the Holstein-Primakoff approximation as is conventionally done, and treats the spins of the atomic gases in an exact way.  
\end{abstract}

\maketitle

\section{\label{sec1}Introduction}
Entanglement is one of the fundamental phenomena observed in quantum mechanics \cite{einstein1935can,horodecki2009quantum}, and it is considered a resource in the context of quantum information science\cite{chitambar2019quantum,wilde2013quantum,bouwmeester2000physics,schleich2016quantum}.  It plays a central role in non-trivial quantum protocols and algorithms and its generation is considered to be one of the essential capabilities when constructing a quantum computer \cite{ladd2010quantum,mermin2007quantum,preskill2012quantum}. While entanglement is most often associated with the microscopic world, it has been also shown to be abundantly present in quantum many-body systems \cite{amico2008entanglement,its2005entanglement,zhang2005thermal,latorre2009short,johnthomas2019}. Atomic gases are a particularly fascinating physical platform for observing many-body entanglement, due to the high level of controllability and low decoherence\cite{hammerer2010quantum,lukin2000entanglement}.  One of the most elementary type of entangled states for an atomic gas are spin squeezed states, where particular observables are reduced below the standard quantum limit \cite{sorensen2001many,hald1999,kuzmich2000generation,esteve2008squeezing,kunkel2018spatially,fadel2018spatial}, and has numerous applications in quantum metrology \cite{gross2012spin,Giovannetti,giovannetti2004quantum,toth2014quantum,giovannetti2011advances,bao2020spin,Sekatski2017quantummetrology}. It has also been observed that Bell violations \cite{bell1964einstein,freedman1972experimental,aspect1982experimental}, which are a stronger form of quantum correlations in the quantum quantifier hierarchy \cite{adesso2016measures,ma2019operational}, can be generated in Bose-Einstein condensates \cite{schmied2016bell}. More exotic types of quantum many-body state can be generated through techniques to perform quantum simulation with a variety of applications \cite{lloyd1996universal,buluta2009quantum,timquantumoptics2020,jane2003simulation,you2017multiparameter,horikiri2016high,lewenstein2007ultracold,lumingduan2021}.

While most of the work relating to entanglement in atomic ensembles has been focused on entanglement that exists between atoms in a single ensemble \cite{gross2012spin,hammerer2010quantum}, works extending this to two or more spatially separate ensembles have also been investigated both theoretically and experimentally. The first experimental demonstration of entanglement between atomic gases was observed in paraffin-coated hot gas cells \cite{julsgaard2001experimental}.  In the scheme, a quantum nondemolition (QND) measurement was performed by beams sequentially illuminating the two gas cells.  This entanglement was used to demonstrate teleportation between two atomic clouds \cite{krauter2013deterministic}, for  continuous variable quantum observables \cite{braunstein2005quantum}.  For Bose-Einstein condensates, currently no experimental demonstration of entanglement between two separate atomic clouds has been performed.  The closest demonstration has been the observation of entanglement between spatially separate regions of a single cloud \cite{fadel2018spatial,kunkel2018spatially,lange2018entanglement,li2013entanglement}.  Numerical and theoretical schemes for entanglement between BECs have been proposed, using a variety of techniques ranging from cavity QED \cite{pyrkov2013,rosseau2014entanglement,ortiz2018adiabatic,hussain2014geometric,abdelrahman2014coherent}, Rydberg excitations \cite{idlas2016}, state dependent forces \cite{treutlein2006}, adiabatic transitions \cite{ortiz2018adiabatic}, and others \cite{abdelrahman2014coherent,oudot2017optimal,Jing_2019}. Such entanglement is fundamental to performing various quantum information tasks based on atomic ensembles, such as quantum teleportation \cite{pyrkov2014quantum,pyrkov2014full}, remote state preparation and clock synchronization \cite{ilo2018remote,manish2021}, and quantum computing \cite{byrnes2012macroscopic,byrnes2015macroscopic}.  

In this paper, we present a measurement operator formalism for QND measurement induced entanglement between two atomic ensembles.  In a previous paper, we developed an exact theory to describe the effect of the QND induced entanglement \cite{aristizabal2021quantum} (see also Refs. \cite{pettersson2017light,ilo2014theory}). The theory is exact in the sense that no approximation is made in terms of the total spin of the atomic ensemble.  In many approaches to QND measurements, only low-order spin correlators are used to capture the dynamics of the measurement, such as working within a Holstein-Primakoff approximation \cite{julsgaard2001experimental,kuzmich2000generation,duan2000quantum,serafin2021nuclear,tsang2012evading}. In our approach, the full wavefunction of the atomic spin can be calculated, due to the exactly solvable dynamics of the QND interaction.  Here, we show how the theory of Ref. \cite{aristizabal2021quantum} can be written in terms of measurement operators, and consider particularly the limiting case where it can be used to construct a projection operator. In Ref. \cite{aristizabal2021quantum} it was noted that just a sequence of two QND measurements can improve the spin correlations.  We develop a general theory of such a sequence of QND measurements (``stroboscopic measurements'') and  analyze the types of states that are generated. Such stroboscopic measurements have been used in the single atomic ensemble case to drive the state towards a macroscopic singlet state \cite{behbood2013real,behbood2014,behbood2013feedback}.  We show that due to the special symmetries that are present in the stroboscopic sequence, it is possible to find the exact states that the system converges in the limit of many stroboscopic projections. Such states are entangled states and thus the scheme can be used as the way of entanglement preparation between atomic ensembles.  

This paper is structured as follows.  In Sec. \ref{sec2} we review the theory of Ref. \cite{aristizabal2021quantum} and introduce the basic system that we are dealing with.  In Sec. \ref{sec3} we introduce a theory of POVMs for the QND measurement, and show that in a particular limiting case this can be used to construct a projection operator.  In Sec. \ref{sec:purestate} we analyze the case of multiple sequential (or stroboscopic) QND measurements.  Here we derive some key mathematical relations which simplify the analysis.  In Sec. \ref{sec:mixedstate} we formulate the projection sequence in a probabilistic framework in terms of density matrices.  In Sec. \ref{vbasistates} we show the properties of the states that the projection sequence converges to.  Finally, in Sec. \ref{sec5} we summarize our results. Some parts of this paper go into the mathematical detail of the measurement operator sequence.  For the reader disinterested in such details, the discussion of Sec. \ref{sec:purestate}C, \ref{sec:purestate}D may be skipped and the results of Lemma 1, 2, and Theorem 1 may be used as mathematical results.

\section{QND induced entanglement}
\label{sec2}

In this section we first briefly review the theory developed in Ref. \cite{aristizabal2021quantum} for producing QND induced entanglement between two atomic ensembles.

\begin{figure}[t]%
\includegraphics[width=\linewidth]{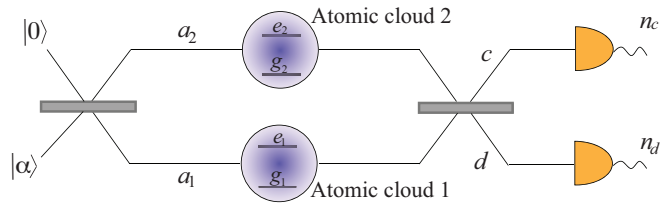}
\caption{QND entangling scheme: An initial coherent light pulse $|\alpha\rangle$ is used to entangle two atomic clouds via the ac stark shift arranged in Mach-Zehnder interferometer configuration. Each of the beamsplitters are 50/50.  The detection of photon outcomes $n_c, n_d$ collapses the state of atomic clouds into entangled states.
}
\label{fig1}%
\end{figure}

\subsection{Definitions}

First consider the states of the atomic gas clouds (see Fig. \ref{fig1}).  We consider each atom within the ensembles to be occupied by one of two internal states.  For example, the states may be two hyperfine ground states of an atom (e.g. $ F=1, m=-1$ and $ F=2, m = +1$ for $^{87}$Rb). The motional degrees of freedom of the atom are decoupled from the spin and may be neglected.  In the case of a BEC, we may define bosonic annihilation operators $e_j,g_j$ for the two internal states respectively, and $ j \in \{1,2 \}$ label the two atomic ensembles \cite{timquantumoptics2020}.  For a collection of $ N $ atoms in each BEC, the initial state of the atomic cloud can be described by 
\begin{align}
		|\psi \rangle = \sum_{k_1,k_2 = 0}^{N} \psi_{k_1 k_2} |k_1, k_2\rangle ,
		\label{generalstate}
\end{align}
where we have defined the Fock states on the $ j$th atomic ensemble as
\begin{align}
|k \rangle =\frac{(e_j^{\dagger})^k(g_j^{\dagger})^{N-k}}{\sqrt{k!(N-k)!}}|\text{vac}\rangle , 
\label{fockstates}
\end{align}
and 
\begin{align}
|k_1, k_2\rangle = | k_1 \rangle \otimes | k_1 \rangle .  
\label{twofocks}
\end{align}
Here the Fock states obey $ \langle k | k' \rangle = \delta_{k k'} $ and the coefficients $ \sum_{k_1, k_2} | \psi_{k_1 k_2} |^2  = 1 $ are normalized.  

For uncondensed thermal atomic gases, in general there are $ 2^N $ possible spin configurations per ensemble, instead of the $ N +1 $ states as defined in (\ref{fockstates}).  However, if the initial state and all applied Hamiltonians are completely symmetric under particle interchange on a single ensemble, there is a mathematical equivalence between the BEC description and the thermal ensemble  \cite{timquantumoptics2020}. Since we will work in the completely symmetric subspace, our results will be equally valid for the thermal atomic case, despite using the bosonic notation.  

The collective spin operators on the $ j $th ensemble  are defined by
\begin{align}
 S^x_j & =e_j^\dagger g_j+e_j^\dagger g_j \nonumber \\
S^y_j & =-ie_j^\dagger g_j +ig_j^\dagger e_j \nonumber \\
S^z_j & =e_j^\dagger e_j-g_j^\dagger g_j,
\end{align}
obeying commutation relations $[S^j,S^k]=2i\epsilon_{jkl}S^l$, where $\epsilon_{jkl}$ is the completely anti-symmetric Levi-Civita tensor.
Spin coherent states, which are completely polarized spin configurations with Bloch sphere angles $ \theta, \phi $ are defined as
\begin{align}
| \theta, \phi \rangle \rangle = \frac{1}{\sqrt{N!}} \left( e^\dagger \cos \frac{\theta}{2}  + g^\dagger e^{i\phi} \sin \frac{\theta}{2}  \right)^N | \text{vac} \rangle  .  
\end{align}
Expanding the spin coherent state we may equally write this in terms of Fock states
\begin{align}
    | \theta, \phi \rangle \rangle = \sum_{k=0}^N \sqrt{N \choose k}
  e^{i(N-k) \phi}   \cos^k \frac{\theta}{2} \sin^{N-k} \frac{\theta}{2}  |k \rangle . 
\end{align}

\subsection{QND entangled wavefunction}

The QND entangling scheme is shown in Fig. \ref{fig1}.  
Here, coherent light is arranged in a Mach-Zehnder configuration and the two atomic gases are placed in each arm of the interferometer. Preparing the atoms in the initial state (\ref{generalstate}), the light interacts with the atomic spins via the QND Hamiltonian \cite{ilo2014theory}, 
\begin{align}
H = \frac{\hbar\Omega}{2}(S^z_{1}-S^z_{2}) ( a^\dagger_1 a_1-a^\dagger_2 a_2) ,
\label{qndhamiltonian}
\end{align}
where $ a_1, a_2 $ denote the bosonic annihilation operators of the light in the two arms of the interferometer.  After interacting with the atoms, the two modes are interfered via the second beam splitter and the photons are detected.

The above sequence modulates the quantum state of the atoms due to the atom-light entanglement that is generated by the QND interaction.  This can be evaluated exactly due to the diagonal form of (\ref{qndhamiltonian}).  We refer the reader to Ref. \cite{aristizabal2021quantum} for further details and present only the final result.  The final unnormalized state after detection of $ n_c, n_d $ photons in modes $c,d $ respectively is 
\begin{align}
	| \widetilde{\psi}_{n_c,n_d}(\tau)\rangle = \sum_{k_1,k_2 = 0}^{N} \psi_{k_1,k_2} C_{n_c,n_d}[(k_1-k_2)\tau]  |k_1, k_2\rangle ,
	\label{modfun}
\end{align}
where we defined the function 
\begin{align}
C_{n_c,n_d}(\chi) = \frac{\alpha^{n_c+n_d}e^{-|\alpha|^2/2}}{\sqrt{n_c!n_d!}}\cos^{n_c}(\chi)\sin^{n_d}(\chi) . 
		\label{modulatingfunc}
\end{align}
Here, $ \alpha $ is the amplitude of the coherent light entering the first beamsplitter in Fig. \ref{fig1} and $ \tau = \Omega t $.  The probability of obtaining a photonic measurement outcome $n_c, n_d $ is
\begin{align}
p_{n_c,n_d} (\tau) & = \langle  \widetilde{\psi}_{n_c,n_d}(\tau)	| \widetilde{\psi}_{n_c,n_d}(\tau)\rangle  \nonumber \\
& = \sum_{k_1, k_2 =0}^N | \psi_{k_1 k_2}  C_{n_c,n_d}[(k_1-k_2)\tau] |^2 . 
\label{probability}
\end{align}
We note that the $ C $-functions are normalized according to
\begin{align}
\sum_{n_c,n_d = 0}^{\infty}|C_{n_c,n_d}(\chi)|^2 = 1  . 	
\label{totalprob}
\end{align}
The most likely photon number counting outcomes are centered around
\begin{align}
n_c+n_d \approx |\alpha|^2,
\end{align}
since the input coherent state has an average photon number of $ |\alpha|^2 $ and the remaining operations are photon number conserving.

The $C$-functions can be approximated for bright coherent light regime $ |\alpha | \gg 1 $ as
\begin{align}
C_{n_c n_d} ( \chi ) \approx & \frac{\alpha^{n_c+ n_d} e^{- |\alpha |^2/2}}{\sqrt{(n_c+n_d)!}}
\frac{\text{sgn} ( \cos^{n_c} ( \chi) \sin^{n_d} ( \chi) )}{ ( \frac{\pi}{2} (n_c + n_d) \sin^2 2 \chi)^{1/4}} \nonumber \\
\times & \exp \Big[ - \frac{(n_c+n_d)}{\sin^2 2 \chi} \big( \sin^2 \chi - \frac{n_d}{n_c+n_d} \big)^2 \Big] ,  
\label{ndnonzeroapprox}
\end{align}
for $ n_d > 0 $.  In the case of  $ n_d = 0 $, the $C$-function is better approximated as 

\begin{align}
C_{n_c n_d =0 } (\chi) \approx \frac{\alpha^{n_c} e^{- |\alpha|^2/2}}{ \sqrt{n_c !} } e^{- \frac{n_c}{2} \sin^2 \chi }  . 
\label{ndzeroapprox}
\end{align}

\subsection{Example} 
\label{sec:example1}

To see how entanglement is generated by the QND scheme, let us choose an initial state for the atoms that is polarized in the $ S^x$-direction
\begin{align}
|\psi_0 \rangle & = |\frac{\pi}{2},0\rangle \rangle |\frac{\pi}{2},0\rangle \rangle \nonumber \\
& = \frac{1}{2^N} \sum_{k_1,k_2 = 0}^{N} \sqrt{{N \choose k_1} {N \choose k_2}   } |k_1 , k_2\rangle.
		\label{initialstatexx}
\end{align}
Now consider the $ n_c \sim |\alpha|^2, n_d = 0 $ photonic measurement outcome, which is a high probability result for short interaction times $ \tau \sim  1/N $. Using the approximation (\ref{ndzeroapprox}) in (\ref{modfun}) with $ \sin \chi \approx \chi $,  we obtain 
\begin{align}
| \widetilde{\psi}_{n_c,n_d}(\tau)\rangle \propto \sum_{k_1,k_2 = 0}^{N}  \sqrt{{N \choose k_1} {N \choose k_2} } 
e^{- \frac{n_c \tau^2 }{2} (k_1-k_2)^2 } |k_1 , k_2\rangle .
	\label{modfunapprox}
\end{align}
In the regime $ n_c \tau^2 \approx |\alpha \tau|^2> 1 $, the Gaussian factor suppresses terms except for $ k_1 = k_2 $.  This takes the form of an entangled state \cite{aristizabal2021quantum,kitzinger2020}.

\section{Quantum measurement theory for QND induced entanglement}
\label{sec3}

In quantum mechanics, a measurement is represented by a positive operator-valued measure (POVM) \cite{Nielsen:2011}, of which projective measurements are a special case. The QND entangling procedure may be viewed as a particular way of measuring the atomic states such that it collapses the state onto an entangled state. In this section, we introduce a POVM based theory of QND measurements, and its associated relations to connect the photonic readouts to particular projection operators.

\subsection{QND POVM operators}

According to the QND entangling protocol described in the previous section,  the initial wave function (\ref{generalstate}) is modulated by an extra factor of $C_{n_c,n_d}[(k_1-k_2)\tau]$ and the final state becomes (\ref{modfun}). 
An efficient way of summarizing this procedure is to define the measurement operator
\begin{align}
M_{n_c n_d}(\tau) =\sum_{k_1,k_2 = 0}	C_{n_c,n_d}[(k_1-k_2)\tau]|k_1,  k_2\rangle \langle k_1 , k_2| .
\label{povmdef}
\end{align}
According to the theory of quantum measurements, the resulting state after the measurement is
\begin{align}
|\widetilde{\psi}_{n_c n_d} (\tau) \rangle = M_{n_c n_d}(\tau) | \psi \rangle
\end{align}
and the probability of this outcome is 
\begin{align}
p_{n_c n_d}(\tau) = \langle \psi |  M_{n_c n_d}^\dagger (\tau)  M_{n_c n_d} (\tau)  | \psi \rangle
\end{align}
in agreement with (\ref{modfun}) and (\ref{probability}) respectively. Since  $  M_{n_c n_d}^\dagger (\tau)  M_{n_c n_d} (\tau) $ is a positive operator and we can evaluate
\begin{align}
\sum_{n_c,n_d }M^{\dagger}_{n_c n_d} (\tau) M_{n_c,n_d} (\tau) = I
\label{completeness}
\end{align}
due to the relation (\ref{totalprob}), we may say that $  M_{n_c n_d} (\tau) $ satisfies the definition of being a POVM.

\begin{figure}[t]%
		\includegraphics[width=\linewidth]{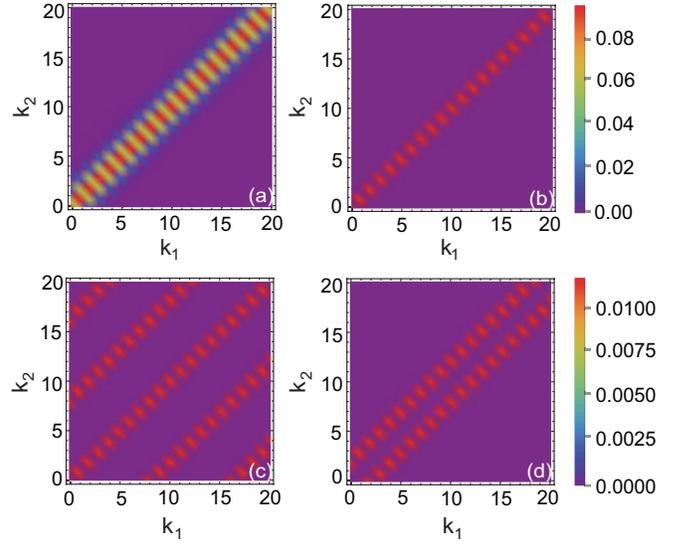}
		\caption{Plot of $C_{n_c,n_d}[(k_1-k_2)\tau]$ in (\ref{modulatingfunc}) with different Fock state pairs for different photon detection outcomes and interaction times. Interaction times are (a) $ t = \frac{\pi}{8N}$, (b) $  t =  \frac{\pi}{2N}$, (c) $  t =\frac{\pi}{8}$, (d) $ t = \frac{\pi}{2N} $.  The measurement outcomes are (a)(b)(c) $ n_c=|\alpha|^2, n_d = 0$, (d)  $ n_c = |\alpha|^2-\alpha, n_d = \alpha $.  For all calculations, number of atoms per BEC is $ N = 30 $ and  $\alpha = 50 $. The scale  for (a)-(c) are the same and shown next to (b).  The  scale for (d) is shown separately and is shown adjacent to the figure.  }
		\label{fig2}%
\end{figure}

The type of the measurement induced by $  M_{n_c n_d}(\tau)  $ depends greatly upon the interaction time $ \tau$ and the outcomes $ n_c, n_d $.  First let us look at the effect of the interaction time $ \tau $.  In Fig.  \ref{fig2}, we plot the modulating function $C_{n_c,n_d}[(k_1-k_2)\tau]$ for various photon detection outcomes $n_c, n_d$ and interaction times $ \tau $. First setting $ n_d = 0 $ and comparing $ \tau = \pi/8N, \pi/2N $ (Figs. \ref{fig2}(a)(b)) we see that for shorter times than $ \tau = \pi/2N $ the function along the diagonal $ k_1 = k_2 $ is broadened.  For interaction times longer than $ \tau = \pi/2N $, additional diagonal lines occur (Fig. \ref{fig2}(c)) according to the location of the peak of the Gaussian in (\ref{ndnonzeroapprox})
\begin{align}
    \sin^2 [ (k_1 - k_2) \tau ]= \frac{n_d}{n_c+n_d} . 
    \label{sinerelation}
\end{align}
For measurement outcomes detecting $n_d>0$ as in Fig. \ref{fig2}(d), we see the correlations are offset following the relation (\ref{sinerelation}).  

To explicitly see the effect of the various $ n_c, n_d $ outcomes, we plot the probability of the measurement outcome
\begin{align}
p_{n_c n_d}( \tau)  &  = \langle k, k+\Delta |  M_{n_c n_d}^\dagger ( \tau)   M_{n_c n_d} ( \tau)   | k , k+\Delta \rangle \nonumber \\
& = |C_{n_c,n_d} ( \Delta \tau) |^2 . 
\end{align}
This gives the probability of various $ n_c, n_d$ outcomes for a state that differs in Fock state by $ \Delta =k_2 - k_1 $. In Fig. \ref{fig3} we see a probability curve that is centered around (\ref{sinerelation}).  For a time $ t = \pi/2N$, there is a one-to-one relation between the measurement readout $ n_c, n_d$ and the magnitude of the Fock number difference $ |\Delta| $  (Fig.  \ref{fig3}(a)).  At this time, the two Fock state differences that have a high probability are
\begin{align}
   \Delta & = \pm \frac{1}{\tau} \sin^{-1} \sqrt{\frac{n_c}{n_c + n_d}} . 
   \label{deltancnd}
\end{align}
For longer times, the relationship is no longer one-to-one (Fig.  \ref{fig3}(b)), and corresponds to the multiple diagonal lines seen in Fig. \ref{fig2}(c).  In order to have a sharply defined projections without additional peaks in $\Delta $, we henceforth consider the time $ \tau = \pi/2N $.

\begin{figure}[t]%
\includegraphics[width=\linewidth]{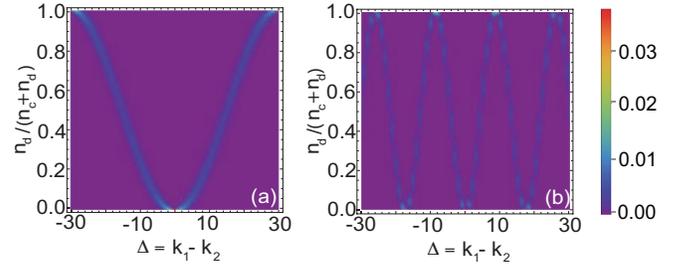}
\caption{Relation of various photon detection outcomes $n_c, n_d$ with offset in Fock states $\Delta = k_1-k_2$ for (a) $ \tau = \frac{\pi}{2N} $, (b) $ \tau = \frac{1}{\sqrt{N}} $ for $N = 30,\alpha=10$. }
\label{fig3}%
	\end{figure}

\subsection{QND projection operators}
\label{sec:qndprojectors}

In the limit that the intensity of coherent light $ |\alpha |^2 $ is very bright such that $  |\alpha \tau |^2 > 1 $, the Gaussian function in  (\ref{ndnonzeroapprox}) is sharply defined and strongly suppresses values of $ k_1 - k_2 $ away from (\ref{sinerelation}). 
In this limit, when two such measurements are made in succession, the net result is a projection operator. Taking the interaction time $ \tau  = \pi/2N $, there are two values of $ k_1 - k_2 $ where the projections occur (Fig. \ref{fig2}(d)), as given by (\ref{deltancnd}).  We may then approximate the POVM according to
\begin{align}
   M_{n_c n_d} ( \tau= \frac{\pi}{2N}  ) \approx  \mathcal{M} _\Delta \hspace{1cm} ( |\alpha \tau |^2  \gg 1 ) 
\end{align}
%
%
where the $ \Delta $ and $ n_c, n_d $ are related according to (\ref{deltancnd}), and we define
\begin{align}
\mathcal{M}_\Delta =&\frac{1}{2^{ \delta_{\Delta} }} \Big( \sum_{k=0}^{N-\Delta}|k,k+\Delta\rangle \langle k,k+\Delta| \nonumber\\
& \pm \sum_{k'=\Delta}^N |k',k'-\Delta\rangle \langle k',k' -\Delta|  \Big) . \nonumber 
\end{align}
A double sequence of such POVMs then gives the projection operator
\begin{align}
    P_\Delta =& \mathcal{M}_\Delta \mathcal{M}_\Delta\nonumber\\
 =&\frac{1}{2^{ \delta_{\Delta} }} \Big( \sum_{k=0}^{N-\Delta}|k,k+\Delta\rangle \langle k,k+\Delta| \nonumber\\
& +\sum_{k'=\Delta}^N |k',k'-\Delta\rangle \langle k',k' -\Delta|  \Big) . 
\label{projzbasis}
\end{align}
Here $ \delta_{\Delta} $ is the Kronecker delta which is 1 if $ \Delta = 0 $ and 0 otherwise.  

The above projection operators are defined with respect to Fock states that are in the $ S^z $ basis.  We may equally define the Fock states in a different basis \cite{timquantumoptics2020}
\begin{align}
|k\rangle^{(\theta,\phi)} =  {\cal U}  (\theta, \phi) |k\rangle^{(z)}
\end{align}
where 
\begin{align}
  {\cal U} (\theta, \phi) =  e^{-i (S^z_1+S^z_2) \phi /2}  e^{-i (S^y_1 + S^y_2) \theta/2} 
  \label{unitaryrotation}
\end{align}
and $ |k\rangle^{(z)} $ are the same Fock states as defined in (\ref{fockstates}).  The projection operators may then be defined with respect to Fock states in a different basis
\begin{align}
P_\Delta^{(\theta,\phi)} =  {\cal U}  (\theta, \phi) P_\Delta^{(z)}  {\cal U}^\dagger (\theta, \phi)
\label{rotatedproj}
\end{align}
where $ P_\Delta^{(z)} $ is the same projector as in (\ref{projzbasis}), but we explicitly specified the basis with the $ ^{(z)} $ label.  For the case that $ \theta = \pi/2 $ and $ \phi = 0 $, the unitary rotation transforms the $ S^z$ eigenstates to $ S^x $ eigenstates and we define
\begin{align}
P_\Delta^{(x)} =   {\cal U} (\frac{\pi}{2}, 0 ) P_\Delta^{(z)}  {\cal U}^\dagger (\frac{\pi}{2}, 0) .
\label{pxprojector}
\end{align}

\subsection{Properties of projection operators}

Here we list the properties of projection operators  (\ref{rotatedproj}).  These also apply to the specific cases (\ref{projzbasis}) and (\ref{pxprojector}).   
\begin{enumerate}
\item The projection operators are idempotent and orthogonal
\begin{align}
	P_{\Delta}^{(\theta,\phi)} P_{\Delta'}^{(\theta,\phi)}  = \delta_{\Delta \Delta'} P_{\Delta}^{(\theta,\phi)} . 
	\label{idemortho}
\end{align}
	\item Projection operators are Hermitian 
	\begin{align}
		(P^{(\theta,\phi)}_\Delta)^\dagger=P^{(\theta,\phi)}_\Delta . 
	\end{align}
\item Projection operators are complete 
	\begin{align}
		\sum_{\Delta}(P^{(\theta,\phi)}_\Delta)^\dagger P^{(\theta,\phi)}_\Delta= 	\sum_{\Delta} P^{(\theta,\phi)}_\Delta = I.  
	\end{align}
	\item The eigenvalues and eigenvectors of projection operators are
	\begin{align}
P_{\Delta}^{(\theta,\phi)} |k, k+\Delta' \rangle^{(\theta,\phi)} = \delta_{\Delta \Delta'} |k, k+\Delta' \rangle^{(\theta,\phi)}  .
\label{eigenvaluesproj}
	\end{align}
	\end{enumerate}
The proofs of these properties are straightforward and can be verified by substituting the definitions.

\subsection{Example}

We show that the projection operator can generate entanglement between the two atomic ensembles by applying it to the initial state (\ref{initialstatexx}).  For example, for the outcome $ \Delta = 0 $ we have
\begin{align}
P_{\Delta = 0 }^{(z)} | \psi_0 \rangle = \frac{1}{2^N} 
\sum_{k=0}^N {N \choose k } | k, k \rangle .
\label{projexampleonez}
\end{align}
Similarly to that obtained in Sec. \ref{sec:example1}, we obtain an entangled state between the two BECs.  

We point out that the entanglement generation depends upon the preparation of a suitable initial state.  From (\ref{eigenvaluesproj}) it is evident that the projection operator applied to a product state of two Fock states leaves it unchanged.  Since a product state does not possess entanglement between the BECs, the operation does not produce entanglement in this case.  In this sense, a single QND measurement should not be considered an equivalent of Bell measurement, for which any initial state and measurement outcome results in an entangled state.  It is better described as a projective operation which can result in an entangled state for suitable initial states.  In the next sections we show how a sequence of projections can drive an arbitrary initial state into an entangled state.

\section{Sequential QND projections: pure state analysis}
\label{sec:purestate}

In Ref. \cite{aristizabal2021quantum} a two-pulse scheme for improved entanglement generation was analyzed.  In the scheme, first the initial state (\ref{initialstatexx}) is prepared, and a QND measurement is performed, to yield a state similar to  (\ref{projexampleonez}).  Then a unitary rotation $ e^{i S^x_1 \pi/4} e^{-i S^x_2 \pi/4} $ was performed and the QND measurement was repeated.  This was found to produce a more strongly entangled state than using only a single QND measurement.  Such a sequence of pulses have been used experimentally in several studies to enhance the entanglement in single atomic ensembles \cite{behbood2013real,behbood2013feedback,behbood2014,vasilakis2015generation}.  We show in this section that the technique is equally applicable in the two-ensemble case, and develop a theory to evaluate the result of multiple projections.

\subsection{Multiple QND measurements}

First let us define the multiple QND measurement scheme in terms of the formalism we have introduced so far.  We work in the regime $ |\alpha \tau |^2 > 1 $ such that the QND measurements can be described using the projection operators introduced in Sec. \ref{sec:qndprojectors}.  Let us define a particular QND projection sequence as 
\begin{align}
T_{\vec{\Delta}_{2L} } & =  P_{\Delta^x_L}^{(x)}P_{\Delta^z_L}^{(z)} \dots  P_{\Delta^x_2}^{(x)}P_{\Delta^z_2}^{(z)} P_{\Delta^x_1}^{(x)}P_{\Delta^z_1}^{(z)} \nonumber \\
& = \prod_{l=1}^L   \Big[ P_{\Delta^x_l}^{(x)}P_{\Delta^z_l}^{(z)}  \Big] .  
   \label{multprojeven}
\end{align}
where the projections in the $z$- and $x$-basis are defined in (\ref{projzbasis}) and (\ref{pxprojector}) respectively. In the product operator, we take the convention that the order of the projectors is arranged from right to left, for labels running from the lower index to the upper index.  Each projection is made in an alternating basis switching between $ z$ and $x$. 
The sequence consists of $ L $ repetitions of projections in the $ z$ and $x$ basis. We take the convention that a sequence always starts with a projection in the $z$-basis. Equation (\ref{multprojeven}) is an example where there are an even number of projections (a total of $2L$ projections) in total. Thus in this case the last projector will be in the $x$-basis. The outcomes are specified by an ordered list
\begin{align}
\vec{\Delta}_{2L} = (\Delta_1^z,\Delta_1^x,\Delta_2^z,\Delta_2^x,\dots,\Delta_L^z,\Delta_L^x) .
\label{deltavectoreven}
\end{align}
where the subscript $2L$ shows the number of projections that are made. Physically, these can be interpreted as the random outcomes associated with a particular projection sequence.  

For an odd number of projections, we have instead
\begin{align}
T_{\vec{\Delta}_{2L+1} } & = P_{\Delta^z_{L+1}}^{(z)} P_{\Delta^x_L}^{(x)}P_{\Delta^z_L}^{(z)} \dots  P_{\Delta^x_2}^{(x)}P_{\Delta^z_2}^{(z)} P_{\Delta^x_1}^{(x)}P_{\Delta^z_1}^{(z)} \nonumber \\
& = P_{\Delta^z_{L+1}}^{(z)}\prod_{l=1}^L  \Big[ P_{\Delta^x_l}^{(x)}P_{\Delta^z_l}^{(z)}  \Big] .  
   \label{multproj}
\end{align}
Here a particular sequence is specified by $ 2L +1 $ parameters
\begin{align}
\vec{\Delta}_{2L+1} = (\Delta_1^z,\Delta_1^x,\Delta_2^z,\Delta_2^x,\dots,\Delta_L^z,\Delta_L^x,\Delta_{L+1}^z ) .
\label{deltavector}
\end{align}
We shall see that the parity of the number of projections (i.e. whether it is even or odd) will make a large difference to the final state.  When discussing a property without any parity dependence, we will drop the subscript on  $ \vec{\Delta}$ for brevity. 

Starting from an initial state $ |\psi_0\rangle $, after a particular projection sequence we obtain an unnormalized state (denoted by the tilde)
\begin{align}
  | \tilde{\psi}_{\vec{\Delta} } \rangle & = T_{\vec{\Delta} }|\psi_0\rangle .  
  \label{multitpsi}
\end{align}
The probability of this particular outcome labeled by $ \vec{\Delta} $ occurring is
\begin{align}
p_{\vec{\Delta} } = \langle \tilde{\psi}_{\vec{\Delta} }| \tilde{\psi}_{\vec{\Delta} }\rangle ,  
\label{probdelta}
\end{align}
where the probabilities satisfy
\begin{align}
     \sum_{\vec{\Delta}} p_{\vec{\Delta} } \equiv \sum_{\Delta_1^z} \sum_{\Delta_1^x} \dots \sum_{\Delta_{L+1}^z} p_{\vec{\Delta} } =1 .  
\label{totalprobone}
\end{align}
We may visualize the sequence of projections as shown in Fig. \ref{fig4}. Each projection occurs randomly and yields in general a different state.  Successive projections yield new states that depend upon the past projection outcomes.

\begin{figure}[t]%
\includegraphics[width=\linewidth]{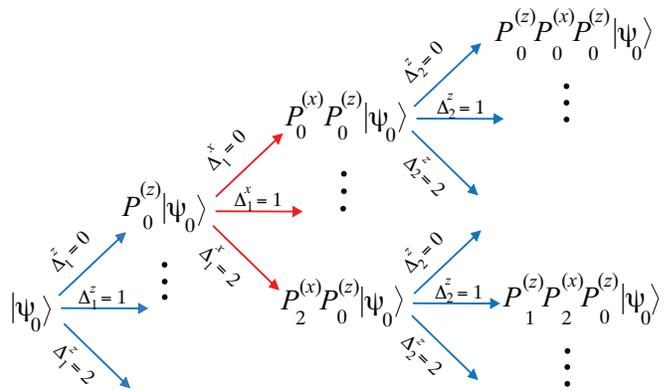}
\caption{Tree diagram corresponding to the multiple QND projection sequence $ T_{\vec{\Delta}} $.  Each branch of the tree corresponds to different projection outcomes labelled by (\ref{deltavector}).  
The probability of a particular outcome is given by (\ref{probdelta}).
}
\label{fig4}%
\end{figure}

We note that the sequence of projection operators $ T_{\vec{\Delta} } $ is not itself a projection operator.  Specifically, the property of idempotence and orthogonality (\ref{idemortho}) does not hold.  They are however complete and satisfy 
\begin{align}
   \sum_{\vec{\Delta}} T_{\vec{\Delta}}^\dagger  T_{\vec{\Delta}}  =    \sum_{\vec{\Delta}} T_{\vec{\Delta}} = I,
\end{align}
which may be used to show (\ref{totalprobone}).

\subsection{Example: convergence to maximally entangled state}
\label{sec:maxentangle}

To see the effect of the multiple QND projections, it is illustrative to see a simple example.  Let us consider the particular outcome sequence with an odd number of projections
\begin{align}
  \vec{\Delta}_{2L+1} = (0,0,\dots,0) .    
  \label{allzerodelta}
\end{align}
The final state in this case is 
\begin{align}
| \tilde{\psi}_{\vec{\Delta}_{2L+1}} \rangle = P_{0}^{(z)} \Big( P_{0}^{(x)}P_{0}^{(z)}  \Big)^L | \psi_0 \rangle . 
\label{allzeropsi}
\end{align}
The effect of the projector $ P_{0}^{(z)} $ is to remove all elements from the wavefunction except $ k_1 =k_2  $ in (\ref{generalstate}).  Hence after each projector  $ P_{0}^{(z)} $  the state is guaranteed to be in a state of the form
\begin{align}
\sum_{k=0}^N \psi_{kk} | k, k \rangle  .
\label{projectedform}
\end{align}

\begin{figure}[t]%
\includegraphics[width=\linewidth]{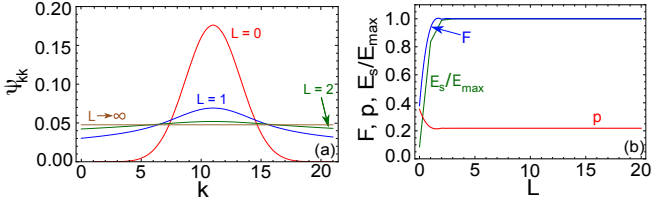}
\caption{Analysis of multiple QND projections: (a)The resulting wavefunction after multiple QND projections of the form (\ref{multitpsi}). The $z$-basis projected wavefunction of the form (\ref{projectedform}) is shown; (b) fidelity, entanglement and probability of the projected wavefunction for $ N = 20 $. For both the figures we have considered $\vec{\Delta}_{2L+1} = (0,0,\dots,0) $.
}
\label{fig5}%
\end{figure}

The effect of multiple projections with $ L $ can then be seen in Fig. \ref{fig5}(a), where we start in the initial state (\ref{initialstatexx}). We see that for $L = 0$ the distribution takes a binomial form, as already seen in (\ref{projexampleonez}).   The largest amplitude occurs at $ k = N/2$. As $ L$ is increased, the state gradually flattens out. This can also be seen in terms of the spin correlations.  In Fig. \ref{fig6} we show the probability of the projected state after multiple QND measurements in various spin bases i.e. $|k\rangle^{(x)}$, $|k\rangle^{(y)}$ and $|k\rangle^{(z)}$ which are eigenstates of the spin operators $S^x$, $S^y$ and $S^z$ respectively. 
The probability of the measurement outcomes $k_1,k_2$ for various Fock states is defined as 
\begin{align}
    p_{l_1l_2} = \frac{|\langle \tilde{\psi}_{\vec{\Delta}_{2L+1}} |\Big(|k\rangle^{(l_1)}\otimes |k\rangle^{(l_2)}\Big)|^2 }{  \langle  \tilde{\psi}_{\vec{\Delta}_{2L+1}} |\tilde{\psi}_{\vec{\Delta}_{2L+1}}  \rangle }, 
    \label{probvariousbases}
\end{align}
where the final projected state is used (\ref{multitpsi}). 
We see that the spin correlations are improved and become more sharply defined, with correlations in the $S^x, S^z$ variables and anticorrelations in $S^y $. 

After a larger number of QND measurement rounds, these spin correlations converge to that of a spin-EPR state (\ref{spineprz}) (see Fig. \ref{fig6}(c)). The improvement in the spin correlations is one of the reasons such a measurement sequence would be performed in practice. The state converges to the maximally entangled state 
\begin{align}
  |\text{EPR}_+ \rangle =   \frac{1}{\sqrt{N+1}} \sum_{k=0}^N | k, k \rangle^{(z)}   .  
  \label{spineprz}
\end{align}
This state has basis invariant properties analogous to Bell states,
and can be written equivalently as \cite{kitzinger2020}
\begin{align}
  |\text{EPR}_+ \rangle =   \frac{1}{\sqrt{N+1}} \sum_{k=0}^N | k, k \rangle^{(x)} .  
\end{align}
This state is an eigenstate of both the $ P_{0}^{(z)} $ and $ P_{0}^{(x)} $ projectors with eigenvalue 1, hence for large $ L$ the state converges to the above state. In Fig. \ref{fig5}(b) we show the fidelity of the state 
\begin{align}
    F =\frac{ | \langle \text{EPR}_+ |\tilde{\psi}_{\vec{\Delta}_{2L+1}}  \rangle|^2}{  \langle  \tilde{\psi}_{\vec{\Delta}_{2L+1}} |\tilde{\psi}_{\vec{\Delta}_{2L+1}}  \rangle }
\end{align}
as a function of $ L$.  We see that the fidelity approaches 1 rapidly.  The probability of obtaining the outcome (\ref{probdelta}) approaches a value of $ \approx 0.22 $ that is for a single projection outcome.  We note that for the outcome (\ref{allzerodelta}), the state converges to the state (\ref{spineprz}) regardless of the initial state.  The convergence properties will however be different for different initial states.

\begin{figure}[t]%
\includegraphics[width=\linewidth]{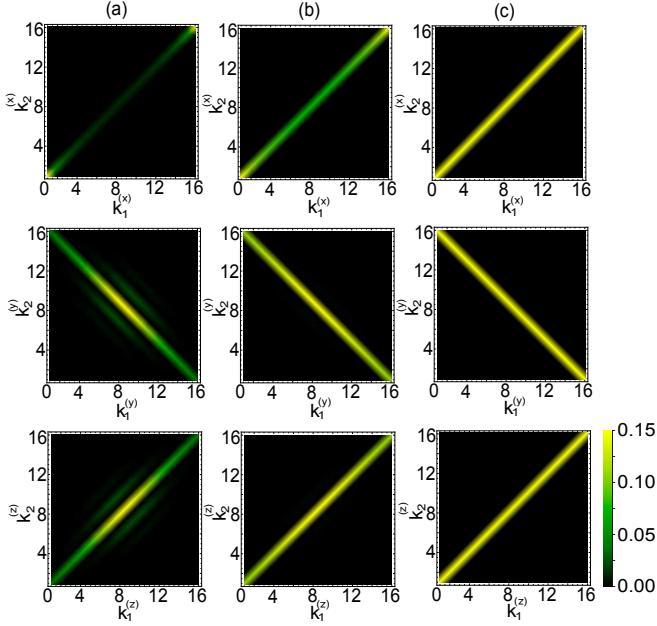}
\caption{Spin correlations after multiple QND projections of the form (\ref{multitpsi}), measured in various bases according to (\ref{probvariousbases}). The number of QND measurement rounds is (a) $L=1$; (b) $L=2$; (c) $L=6$. The number of atoms in each ensemble is  $ N = 15 $ and the measurement outcome is taken as $\vec{\Delta}_{2L+1} = (0,0,\dots,0) $. 
}
\label{fig6}%
\end{figure}

The state (\ref{spineprz}) is a maximally entangled state, and accordingly we may confirm the convergence of entanglement (Fig. \ref{fig5}(b)).  The entanglement can be quantified by the von Neumann entropy, defined as 
\begin{align}
E(\rho) = - \text{Tr} ( \rho \log_2 \rho). 
\label{entanglevonneumann}
\end{align}
where the reduced density matrix on BEC 1 is
\begin{align}
    \rho=Tr_2(| \tilde{\psi}_{\vec{\Delta}_{2L+1}}\rangle \langle \tilde{\psi}_{\vec{\Delta}_{2L+1}}|)
\end{align}
In Fig. \ref{fig5}(b) we have normalized it with maximum entanglement between two BECs $ E_\text{max} = \log_2 (N+1) $. The entanglement increases and saturates at maximum value once the state has a high fidelity with the state (\ref{spineprz}).

\subsection{Singular value decomposition of two projectors}

The example in the previous section illustrates the general behavior of multiple QND projections. The basic behavior is that after a number of QND measurements, the state converges to a fixed state.  This was a relatively simple example, where some known properties of the states could be used to deduce the final state. We now examine the more general case for an arbitrary $ \vec{\Delta} $.  

The following key mathematical result greatly simplifies the subsequent analysis:  
\begin{lemma}
Consider a sequence of two projections
\begin{align}
T_{\Delta^z \Delta^x} = P_{\Delta^x}^{(x)} P_{\Delta^z}^{(z)} 
\label{tmatfortwo}
\end{align}
as defined in (\ref{multprojeven}).  Its singular value decomposition can be written
\begin{align}
T_{\Delta^z\Delta^x}=U \Lambda_{\Delta^z \Delta^x} V^{\dagger}
\label{tsvdform}
\end{align}
where $ U, V$ are unitary matrices that are independent of $ \Delta^z,\Delta^x$, and $ \Lambda_{\Delta^z,\Delta^x} $ is a diagonal matrix up to a permutation of rows and columns, i.e. a matrix with only one non-zero element in each row and column.  
\end{lemma}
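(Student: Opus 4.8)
The plan is to expose a hidden $SU(2)$ structure common to both projector families and reduce everything to the representation theory of a single spin. Introduce the \emph{twisted} collective spin operators $J_x=\frac{1}{2}(S^x_1-S^x_2)$, $J_y=\frac{1}{2}(S^y_1+S^y_2)$, $J_z=\frac{1}{2}(S^z_1-S^z_2)$; a short computation from $[S^a_j,S^b_j]=2i\epsilon_{abc}S^c_j$ shows that they obey the ordinary angular-momentum relations $[J_a,J_b]=i\epsilon_{abc}J_c$ (the mixed sign pattern, a minus in $J_x,J_z$ but a plus in $J_y$, is exactly what makes the algebra close). Hence $\mathcal{H}=(\mathbb{C}^{N+1})^{\otimes 2}$ decomposes into multiplets of this twisted spin, and since the $J_z$ eigenvalue $k_1-k_2$ occurs with multiplicity $N+1-|k_1-k_2|$ the decomposition is multiplicity free, $\mathcal{H}=\bigoplus_{j=0}^{N}\mathcal{H}_j$. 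From $J_z|k_1,k_2\rangle=(k_1-k_2)|k_1,k_2\rangle$ one checks that the operator in (\ref{projzbasis}) is precisely the spectral projector of $J_z^2$ onto eigenvalue $\Delta^2$; and since $\mathcal{U}(\pi/2,0)=e^{-i(\pi/2)J_y}$ is a quarter turn taking $J_z\mapsto J_x$, definition (\ref{pxprojector}) makes $P^{(x)}_\Delta$ the spectral projector of $J_x^2$ onto $\Delta^2$.

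Next I would introduce the two $\pi$-rotations $K_x=e^{-i\pi J_x}$ and $K_z=e^{-i\pi J_z}$. Conjugation by $K_x$ fixes $J_x$ and flips $J_z$, and conjugation by $K_z$ fixes $J_z$ and flips $J_x$, so both commute with $J_z^2$ and $J_x^2$, hence with every $P^{(z)}_\Delta$, every $P^{(x)}_\Delta$, every $T_{\Delta^z\Delta^x}=P^{(x)}_{\Delta^x}P^{(z)}_{\Delta^z}$, and with the Casimir $J^2=J_x^2+J_y^2+J_z^2$; they also commute with each other, since all $\mathcal{H}_j$ carry integer spin. I would then take $V$ to be the unitary whose columns form the ``$z$-type EPR basis'': inside each $\mathcal{H}_j$ the state $|j,0\rangle_z$ together with $\frac{1}{\sqrt{2}}\bigl(|j,\Delta\rangle_z\pm|j,-\Delta\rangle_z\bigr)$ for $0<\Delta\le j$, where $|j,m\rangle_z$ are the $J_z$ eigenvectors; and $U$ the analogous ``$x$-type EPR basis'' built from $J_x$ eigenvectors. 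These are genuine $\Delta$-independent orthonormal bases, because $\{J^2,J_z^2,K_x,K_z\}$ is nondegenerate — within $\mathcal{H}_j$, $J_z^2$ fixes $|\Delta|$ and $K_x$, which acts as $(-1)^j$ times the swap on the span of $\{|j,\Delta\rangle_z,|j,-\Delta\rangle_z\}$, separates that two-dimensional block — and likewise $\{J^2,J_x^2,K_z,K_x\}$.

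Finally I would localize. As $T_{\Delta^z\Delta^x}$ commutes with the common subfamily $\{J^2,K_x,K_z\}$, which is diagonalized by both $U$ and $V$, it suffices to examine $T_{\Delta^z\Delta^x}$ on a single joint eigenspace $W$ of $\{J^2,K_x,K_z\}$, labelled by $j$ and two signs $\epsilon_x,\epsilon_z$. Inside $W$ each $P^{(z)}_\Delta|_W$ is a rank-$\le 1$ projector onto one vector of $V\cap W$ (fixing $\epsilon_z$ pins the parity of $\Delta$ and fixing $\epsilon_x$ pins the symmetrization sign, so at most one such vector survives), and similarly each $P^{(x)}_\Delta|_W$ is rank $\le 1$ onto one vector of $U\cap W$; therefore $T_{\Delta^z\Delta^x}|_W$ has rank $\le 1$ and carries a single $V$-vector of $W$ to a scalar multiple of a single $U$-vector of $W$, killing the rest. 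Assembling over the mutually orthogonal blocks $W$ gives that $\Lambda_{\Delta^z\Delta^x}=U^\dagger T_{\Delta^z\Delta^x}V$ has at most one nonzero entry per row and per column, which is the assertion; its nonzero entries are real (overlaps of $J_z$- and $J_x$-eigenstates) and can be rescaled to genuine singular values at the unavoidable cost of $\Delta$-dependent signs. I expect the one genuinely creative step to be recognizing the twisted $SU(2)$, with its asymmetric signs, in the first step; everything afterwards is bookkeeping, the only minor nuisances being the $\Delta=0$ corner cases and observing that each $W$ contains equally many vectors of the two EPR bases (automatic, since each is an orthonormal basis of $W$).
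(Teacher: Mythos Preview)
Your proof is correct and takes a genuinely different, more structural route than the paper. The paper establishes the existence of common $U,V$ indirectly: it forms the Gram matrices $R_{\Delta^z\Delta^x}=T_{\Delta^z\Delta^x}^\dagger T_{\Delta^z\Delta^x}$ and proves $[R_{\Delta^z\Delta^x},R_{\Delta^{z\prime}\Delta^{x\prime}}]=0$ by first applying a local $\pi$-rotation $e^{iS^y_2\pi/2}$ on the second ensemble to convert the spin difference $S^z_1-S^z_2$ into the sum $S^z_1+S^z_2$, then expanding in the coupled total-angular-momentum basis $|J,M\rangle$ via Clebsch--Gordan coefficients, and finally verifying the five-projector identity (\ref{thefiveps}) as an explicit equality of products of Wigner $d$-matrix elements. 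Your twisted spin $\bm{J}$ is precisely the paper's total spin pulled back through that local flip, so the underlying decomposition $\mathcal{H}=\bigoplus_j\mathcal{H}_j$ is the same; what you do differently is bypass the commutativity computation altogether and instead exhibit the simultaneous eigenbases directly, using the extra $\mathbb{Z}_2\times\mathbb{Z}_2$ symmetry generated by the $\pi$-rotations $K_x,K_z$ to cut each $\mathcal{H}_j$ into blocks on which every $P^{(z)}_\Delta$ and $P^{(x)}_\Delta$ has rank at most one. This buys you an explicit, closed-form description of $U$ and $V$ (which the paper obtains only numerically in Appendix~\ref{app:examplelemma1}) and makes the permuted-diagonal form of $\Lambda$ immediate; the paper's approach, on the other hand, avoids the bookkeeping of the $(\epsilon_x,\epsilon_z)$ sectors, the phase conventions for $K_z$ acting on $|j,m\rangle_x$, and the $\Delta=0$ corner cases you flag at the end.
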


\begin{proof}
The full proof is given in Appendix \ref{app:proof}. Here we provide the main steps of the proof. First define the Hermitian matrix
\begin{align}
R_{\Delta^z \Delta^x} & \equiv  T_{\Delta^z \Delta^x}^\dagger T_{\Delta^z\Delta^x} \label{rdefinition}\\
& = P_{\Delta^z}^{(z)}  P_{\Delta^x}^{(x)} P_{\Delta^z}^{(z)} 
\end{align}
From Appendix \ref{app:proof}, 
it follows that for arbitrary choices of the index $ \Delta^z,\Delta^x$ we have
\begin{align}
[R_{\Delta^z \Delta^x} , R_{{\Delta^z}' {\Delta^x}'} ] = 0 .
\label{commuteq}
\end{align}
For any two commuting Hermitian matrices, it is possible to write down a common unitary transformation that diagonalizes both the operators. 
Since all $ R$'s mutually commute, there is a common unitary transformation for all the $ R$'s. Hence
\begin{align}
R_{\Delta^z \Delta^x} = V D_{\Delta^z\Delta^x} V^\dagger ,
\label{rdiagonalizedu}
\end{align}
where $ D $ is a diagonal matrix.  Substituting (\ref{tsvdform}) in (\ref{rdefinition}) we have
\begin{align}
R_{\Delta^z \Delta^x} = V \Lambda_{\Delta^z\Delta^x} \Lambda_{\Delta^z\Delta^x}^\dagger V^\dagger . 
\end{align}
Proving (\ref{commuteq}) for all possible $\Delta^z$ and $\Delta^x$ demonstrates that a common set of eigenvectors diagonalizes the Gram matrix $R_{\Delta^z \Delta^x}$, and the resultant eigenvector matrix therefore forms a consistent set of left singular vectors for $T_{\Delta^z \Delta^x}$.
For the $ U $ unitary matrix, the argument is identical except that one starts with the definition $ R_{\Delta^z \Delta^x} = T_{\Delta^z \Delta^x} T_{\Delta^z\Delta^x}^\dagger$. 
\end{proof}

An example of Lemma 1 is shown in Appendix \ref{app:examplelemma1} for illustration.

\subsection{Evaluation of multiple projectors}

With the aid of Lemma 1, we may write a more general sequence of projections in simpler form. 
\begin{lemma}
The matrix product for a sequence of projections with an odd number of projectors as in (\ref{multproj}) can be written
\begin{align}
    T_{\vec{\Delta}_{2L+1}}& = V\left( \prod_{l=1}^L\Lambda_{\Delta_l^x\Delta_{l+1}^z}^{\dagger}\Lambda_{\Delta_l^x\Delta_l^z}\right) V^{\dagger} ,  
    \label{multprojsimpl}
\end{align}
and for an even number of projectors as in (\ref{multprojeven}),
\begin{align}
    T_{\vec{\Delta}_{2L}}& = U   \Lambda_{\Delta_L^x\Delta_L^z}
    \left( \prod_{l=1}^{L-1}\Lambda_{\Delta_l^x\Delta_{l+1}^z}^{\dagger}
    \Lambda_{\Delta_l^x\Delta_l^z}\right) V^{\dagger} ,  
    \label{multprojsimpleven}
\end{align}
for  $ L \ge 1$. 
Here the unitary matrices $ U, V $  and the matrices
$ \Lambda_{\Delta_l^x\Delta_l^z} $ are the same as given in  (\ref{tsvdform}).   
\end{lemma}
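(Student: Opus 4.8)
The plan is to prove Lemma 2 by induction on $L$, repeatedly inserting the singular value decomposition from Lemma 1 and exploiting the fact that the left and right singular vector matrices $U$ and $V$ are \emph{the same} for every choice of indices $\Delta^z,\Delta^x$. The key structural observation is the following: each factor $P^{(x)}_{\Delta^x_l} P^{(z)}_{\Delta^z_l}$ in the product $T_{\vec{\Delta}}$ is exactly $T_{\Delta^z_l \Delta^x_l}$ in the notation of Lemma 1, so by (\ref{tsvdform}) it equals $U \Lambda_{\Delta^x_l \Delta^z_l} V^\dagger$. (I will be a little careful about the index ordering in $\Lambda$ versus the statement of Lemma 1; the paper writes $\Lambda_{\Delta^z\Delta^x}$ in (\ref{tsvdform}) and $\Lambda_{\Delta^x_l\Delta^z_l}$ in (\ref{multprojsimpl}), so I would fix one convention and state it explicitly, treating $\Lambda_{\Delta^z\Delta^x}$ and $\Lambda_{\Delta^x\Delta^z}$ as the same symbol written in the order convenient for the telescoping.)

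First I would handle the even case (\ref{multprojeven}). Write
\begin{align}
T_{\vec{\Delta}_{2L}} = \prod_{l=1}^{L}\left[ P^{(x)}_{\Delta^x_l} P^{(z)}_{\Delta^z_l}\right] = \prod_{l=1}^{L} \left[ U \Lambda_{\Delta^x_l\Delta^z_l} V^\dagger \right],
\end{align}
where the product is ordered from right ($l=1$) to left ($l=L$) following the paper's convention. Between consecutive factors one encounters the pattern $\cdots V^\dagger \, U \cdots$. The crucial point — which I would either extract from the Appendix proof of Lemma 1 or prove here as a short sub-claim — is that $V^\dagger U$ collapses the two adjacent $\Lambda$ factors into the combination $\Lambda^\dagger_{\Delta^x_l\Delta^z_{l+1}}\Lambda_{\Delta^x_l\Delta^z_l}$ appearing in the statement. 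Concretely, $R_{\Delta^z\Delta^x} = T^\dagger T = V \Lambda^\dagger \Lambda V^\dagger = P^{(z)}_{\Delta^z} P^{(x)}_{\Delta^x} P^{(z)}_{\Delta^z}$, and since the projectors are idempotent, the adjacent $P^{(z)}_{\Delta^z_l}\cdot P^{(z)}_{\Delta^z_{l+1}}$ that arise when multiplying two consecutive two-projector blocks can be recombined; unwinding this at the level of the SVD gives precisely $\cdots \Lambda^\dagger_{\Delta^x_l \Delta^z_{l+1}}\Lambda_{\Delta^x_l\Delta^z_l}\cdots$ with a single leftmost $U\Lambda_{\Delta^x_L\Delta^z_L}$ and a single rightmost $V^\dagger$ surviving. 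This yields (\ref{multprojsimpleven}).

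Then the odd case (\ref{multproj}) follows immediately: $T_{\vec{\Delta}_{2L+1}} = P^{(z)}_{\Delta^z_{L+1}} T_{\vec{\Delta}_{2L}}$, and using $P^{(z)}_{\Delta^z_{L+1}} = T_{\Delta^z_{L+1}\,\Delta^x}^{\phantom{\dagger}}$-type relations — more directly, multiplying the even result on the left by $P^{(z)}_{\Delta^z_{L+1}}$ and noting that $P^{(z)}_{\Delta^z_{L+1}} U \Lambda_{\Delta^x_L\Delta^z_L} = V \Lambda^\dagger_{\Delta^x_L\Delta^z_{L+1}}\Lambda_{\Delta^x_L\Delta^z_L}$ by the same $R$-matrix identity (since $P^{(z)} P^{(x)}_{\Delta^x_L} P^{(z)}_{\Delta^z_L}$ is a three-projector string whose SVD structure is governed by the $R$'s) — produces (\ref{multprojsimpl}). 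I would organize this as: base case $L=1$ checked directly from Lemma 1, then the inductive step absorbing one more $\left[P^{(x)}P^{(z)}\right]$ block using the $V^\dagger U$ recombination identity.

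The main obstacle I anticipate is making the $V^\dagger U$ (equivalently, the $P^{(z)}_{\Delta^z_l} P^{(z)}_{\Delta^z_{l+1}}$) recombination rigorous at the level of the permuted-diagonal matrices $\Lambda$, rather than just heuristically. Because $\Lambda_{\Delta^z\Delta^x}$ is diagonal only up to a permutation, one must track that the permutation structure is consistent across indices — i.e. that the same common eigenbasis $V$ (from the mutual commutativity (\ref{commuteq})) orders the blocks compatibly so that $\Lambda^\dagger_{\Delta^x_l\Delta^z_{l+1}}\Lambda_{\Delta^x_l\Delta^z_l}$ is well-defined as written and the telescoping does not secretly require a different basis at each step. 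I expect this to follow cleanly from the Appendix proof of Lemma 1 (which already establishes the shared $U,V$), but it is the step that needs the most care, so I would state the needed recombination identity as an explicit lemma-internal claim, prove it from idempotence plus the shared SVD, and then let the induction run mechanically.
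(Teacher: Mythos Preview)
Your plan is in the right neighborhood but misses the key simplification the paper uses, and the missed simplification is exactly the ``main obstacle'' you flag at the end.

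The paper does \emph{not} deal with $V^\dagger U$ at all. Instead, it starts from the \emph{odd} case and uses idempotence to double the middle $P^{(x)}$ projectors (and the shared $P^{(z)}$ between consecutive blocks), rewriting the whole product as
\[
T_{\vec{\Delta}_{2L+1}} \;=\; \prod_{l=1}^{L}\Big[P^{(z)}_{\Delta^z_{l+1}}P^{(x)}_{\Delta^x_l}\,P^{(x)}_{\Delta^x_l}P^{(z)}_{\Delta^z_l}\Big]
\;=\;\prod_{l=1}^{L} T^\dagger_{\Delta^z_{l+1}\Delta^x_l}\,T_{\Delta^z_l\Delta^x_l}.
\]
Now each block is $V\Lambda^\dagger U^\dagger\cdot U\Lambda V^\dagger$, so the inner $U^\dagger U=I$ cancels inside every block, and the outer $V^\dagger V=I$ cancels between consecutive blocks. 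That gives (\ref{multprojsimpl}) directly; the even case (\ref{multprojsimpleven}) then follows by taking $L\to L-1$ in the odd result and appending one more factor $T_{\Delta^z_L\Delta^x_L}=U\Lambda_{\Delta^x_L\Delta^z_L}V^\dagger$ on the left.

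Two specific issues with your write-up: (i) in your even-case factorization $\prod_l[U\Lambda V^\dagger]$, the interface between consecutive blocks is $P^{(z)}_{\Delta^z_{l+1}}$ next to $P^{(x)}_{\Delta^x_l}$, \emph{not} two $P^{(z)}$'s as you state; and (ii) the $V^\dagger U$ you would need is genuinely not the identity, so the ``recombination identity'' you propose would in effect have to reproduce the paper's idempotence trick anyway to be made rigorous. Reordering the argument as the paper does --- odd case first, $T^\dagger T$ pairing, then derive the even case --- removes the obstacle entirely and makes the induction unnecessary.
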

\begin{proof}
Using the idempotence relation we can write (\ref{multproj}) as 
\begin{align}
T_{\vec{\Delta}_{2L+1} } & = \prod_{l=1}^L  \Big[ P_{\Delta^z_{l+1} }^{(z)}  P_{\Delta^x_l}^{(x)} P_{\Delta^x_l}^{(x)}  P_{\Delta^z_l}^{(z)}  \Big] \nonumber \\
& = \prod_{l=1}^L T_{\Delta_{l+1} ^z \Delta_{l}^x}^{\dagger} T_{\Delta_l^z\Delta_l^x} .
   \label{multprojlemma2}
\end{align}
Substituting  (\ref{tsvdform}) into (\ref{multprojlemma2}), the unitary rotations $ U, V $ at all intermediate steps can be removed by $ U^\dagger U = V^\dagger V = I$ since they are common for all $ \Delta^x , \Delta^z $.  This yields (\ref{multprojsimpl}). For (\ref{multprojsimpleven}), use (\ref{multprojsimpl}) with $ L \rightarrow L-1$ and apply  (\ref{tsvdform}) once. 
\end{proof}

From Lemma 2, we see that in a sequence of projections, there is a particular basis where the evolution of the state becomes particularly simple.  The $ \Lambda $ matrices are diagonal up to a permutation of rows and columns, which means that the computation can be performed efficiently.  

To illustrate this, let us now use Lemma 2 to determine the effect of applying the sequential projection for the particular case of an initial state in the $ V$-basis
\begin{align}
| \psi_0 \rangle = | \bm{k} \rangle^{(V)} \equiv   V | \bm{k} \rangle^{(z)} , 
  \label{initialstatevk1k2}
\end{align}
where $ | \bm{k} \rangle^{(z)} $ is defined in (\ref{twofocks}), and we have defined
\begin{align}
\bm{k} = (k_1, k_2) .  
\end{align}
Considering the case with an odd number of projectors first, apply (\ref{multprojsimpl}) to (\ref{initialstatevk1k2}).  This results in the unnormalized state
\begin{align}
   T_{\vec{\Delta}_{2L+1} } | \bm{k} \rangle^{(V)} =   V\left( \prod_{l=1}^L\Lambda_{\Delta_l^x\Delta_{l+1}^z}^{\dagger}
   \Lambda_{\Delta_l^x\Delta_l^z}\right) | \bm{k} \rangle^{(z)} ,
   \label{tappliedk1k2}
\end{align}
where we take the case with an odd number of projections.  Since the matrix $ \Lambda_{\Delta_l^x \Delta_l^z} $ is a permutation matrix up to some coefficients, the effect of each application of the matrix will be to change the labels $ \bm{k} \rightarrow \bm{k}' $, and adjust the normalization of the state.  Let us define the permutation function associated with the matrix $ \Lambda_{\Delta_l^x \Delta_l^z} $ 
\begin{align}
   \bm{k}' & = \pi_{\Delta^x\Delta^z}  (\bm{k} )  . 
\end{align}
For the matrix $ \Lambda_{\Delta^x \Delta^z}^\dagger $, we have the inverse function 
\begin{align}
   \bm{k} & = \pi_{\Delta^x\Delta^z}^{-1}  (\bm{k}' )  . 
\end{align}
Then for the $l$th pair of $ \Lambda $ matrices in (\ref{tappliedk1k2}), the labels shift by
\begin{align}
 \bm{k}_{l}' & = \pi_{\Delta_l^x\Delta_l^z} (  \bm{k}_{l})  \nonumber \\
    \bm{k}_{l+1} & = \pi^{-1}_{\Delta_l^x\Delta_{l+1}^z} ( \bm{k}_{l}').  
    \label{recursivekrelation}
\end{align}
Here $  \bm{k}_{l}' $ are the state labels after application of $  \Lambda_{\Delta_l^x\Delta_l^z} $ and $  \bm{k}_{l+1} $ are the labels after the $ \Lambda_{\Delta_l^x\Delta_{l+1}^z}^{\dagger} $. The initial state label is $ \bm{k}_1 = \bm{k} $ in  (\ref{initialstatevk1k2}). This is a recursive relation by which we can evaluate the final state indices. 

The amplitude of (\ref{tappliedk1k2}) can be evaluated by 
defining the matrix element for this permutation as
\begin{align}
 a_{\Delta^x\Delta^z}(\bm{k}) = \langle \bm{k}' | \Lambda_{\Delta^x\Delta^z} | \bm{k}\rangle .
 \label{afactor}
\end{align}
Application of each  $ \Lambda $ matrix gives a factor as given in (\ref{afactor}).  Thus the final coefficient is 
\begin{align}
    A_{\vec{\Delta}_{2L+1}} ( \bm{k} ) = \prod_{l=1}^{L} \Big[ 
     a_{\Delta^x_l \Delta^z_{l+1} }( \bm{k}_{l+1} )
     a_{\Delta^x_l \Delta^z_l }(\bm{k}_l ) \Big] ,
     \label{atotalcoeff}
\end{align}
and the $ \bm{k}_l $ are evaluated by the recursion relation (\ref{recursivekrelation}).  We also define the result of $ L $ rounds of recursion of the relation (\ref{recursivekrelation}) as %
\begin{align}
    r_{\vec{\Delta}_{2L+1}} (\bm{k}) = \bm{k}_{L+1} .  
    \label{recursionresult}
\end{align}
Then (\ref{tappliedk1k2}) can be evaluated to give the unnormalized state
\begin{align}
 T_{\vec{\Delta}_{2L+1} } | \bm{k}  \rangle^{(V)} = A_{\vec{\Delta}_{2L+1} } ( \bm{k} ) |  r_{\vec{\Delta}_{2L+1} } (\bm{k})  \rangle^{(V)}  .  
 \label{cor1result}
\end{align}

For the case with an even number of projectors, we have
\begin{align}
   T_{\vec{\Delta}_{2L} } | \bm{k} \rangle^{(V)} =   U  \Lambda_{\Delta_L^x\Delta_L^z} \left( \prod_{l=1}^{L-1}  \Lambda_{\Delta_l^x\Delta_{l+1}^z}^{\dagger}
   \Lambda_{\Delta_l^x\Delta_l^z}\right) | \bm{k} \rangle^{(z)} ,
   \label{tappliedk1k2evena}
\end{align}
using (\ref{multprojsimpleven}).  The result of the recursion relation in this case is
\begin{align}
r_{\vec{\Delta}_{2L}} (\bm{k}) = \bm{k}_{L}' .  
\label{recursionresulteven}
\end{align}
and the coefficient is 
\begin{align}
A_{\vec{\Delta}_{2L}} ( \bm{k} ) =a_{\Delta^x_L \Delta^z_L }(\bm{k}_L ) \prod_{l=1}^{L-1} \Big[ 
     a_{\Delta^x_l \Delta^z_{l+1} }( \bm{k}_{l+1} )
     a_{\Delta^x_l \Delta^z_l }(\bm{k}_l ) \Big] .  
     \label{atotalcoeffeven}
\end{align}
The final unnormalized result in this case is
\begin{align}
 T_{\vec{\Delta}_{2L} } | \bm{k}  \rangle^{(V)} = A_{\vec{\Delta}_{2L} } ( \bm{k} ) |  r_{\vec{\Delta}_{2L} } (\bm{k})  \rangle^{(U)}  ,  
 \label{cor1resulteven}
\end{align}
note that the final result is in the $ U $-basis in this case because of the $ U $ in (\ref{tappliedk1k2evena}).   

For (\ref{cor1result}) and (\ref{cor1resulteven}), the normalized state is simply $  |  r_{\vec{\Delta}} (\bm{k})  \rangle^{(V)} $, unless $ A_{\vec{\Delta}} ( \bm{k} ) = 0 $ in which case the projection sequence is an outcome with zero probability.

We are now ready to state our first main result. 
\begin{theorem}
Applying a sequence of projections $ T_{\vec{\Delta} } $ on an arbitrary initial state $ | \phi \rangle $  results in the unnormalized state
\begin{align}
T_{\vec{\Delta}_{2L+1} } | \phi \rangle = \sum_{\bm{k}}  A_{\vec{\Delta}_{2L+1} } ( \bm{k} ) \phi_{\bm{k}}^{(V)} |  r_{\vec{\Delta}_{2L+1} } (\bm{k} )  \rangle^{(V)}
\label{theorem1}
\end{align}
for an odd number of projectors as in (\ref{multproj}) and 
\begin{align}
T_{\vec{\Delta}_{2L} } | \phi \rangle = \sum_{\bm{k}}  A_{\vec{\Delta}_{2L} } ( \bm{k} ) \phi_{\bm{k}}^{(V)} |  r_{\vec{\Delta}_{2L} } (\bm{k} )  \rangle^{(U)}
\label{theorem1even}
\end{align}
for an even number of projectors as in (\ref{multprojeven}).  Here,  $ \phi_{\bm{k}}^{(V)}  =  \langle \bm{k} | V^\dagger | \phi   \rangle $ is the amplitude of $ | \phi \rangle $ in the $ V$-basis,  $ L \ge 1$, 
$ r_{\vec{\Delta}} (\bm{k} ) $ is the result of the recursion relation (\ref{recursionresult}) and (\ref{recursionresulteven}) and $ A_{\vec{\Delta}} ( \bm{k} ) $ is defined in (\ref{atotalcoeff}) and (\ref{atotalcoeffeven}).  
\end{theorem}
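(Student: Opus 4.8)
The plan is to reduce Theorem~1 to the single-basis-state results (\ref{cor1result}) and (\ref{cor1resulteven}) by linearity, since the whole purpose of Lemmas~1 and 2 was to diagonalize the entire projection sequence in one fixed pair of bases. Concretely, I would first expand the arbitrary input state in the $V$-basis, $|\phi\rangle = \sum_{\bm{k}} \phi_{\bm{k}}^{(V)} |\bm{k}\rangle^{(V)}$ with $\phi_{\bm{k}}^{(V)} = \langle \bm{k}| V^\dagger |\phi\rangle$, which is legitimate because $V$ is unitary and $\{|\bm{k}\rangle^{(V)}\}$ is therefore an orthonormal basis. Since $T_{\vec{\Delta}}$ is a linear operator, acting on each term and summing gives the result, so the entire content of the theorem lies in what $T_{\vec{\Delta}}$ does to a single $V$-basis vector, which has already been computed.

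Second, I would simply invoke that computation. For an odd number of projectors, (\ref{cor1result}) gives $T_{\vec{\Delta}_{2L+1}} |\bm{k}\rangle^{(V)} = A_{\vec{\Delta}_{2L+1}}(\bm{k}) |r_{\vec{\Delta}_{2L+1}}(\bm{k})\rangle^{(V)}$; multiplying by $\phi_{\bm{k}}^{(V)}$ and summing over $\bm{k}$ yields (\ref{theorem1}) verbatim. For an even number of projectors, (\ref{cor1resulteven}) has the identical structure except that the output vector is expressed in the $U$-basis, which immediately delivers (\ref{theorem1even}). The restriction $L\ge 1$ is inherited unchanged from Lemma~2. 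Nothing in this step uses anything beyond linearity and orthonormality once (\ref{cor1result}) and (\ref{cor1resulteven}) are in hand.

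The part that needs care — and which I expect to be the only real obstacle — is ensuring that the recursion $r_{\vec{\Delta}}$ and the coefficient $A_{\vec{\Delta}}$ are well-defined for every $\bm{k}$ appearing in the sum. By Lemma~1 each $\Lambda_{\Delta^x\Delta^z}$ is a diagonal matrix up to a permutation of rows and columns (a generalized permutation matrix), so it sends a basis vector either to a scaled basis vector or to zero; the recursion (\ref{recursivekrelation}) tracks the former case and $A_{\vec{\Delta}}(\bm{k})$ accumulates the scalar factors through (\ref{afactor}). When an intermediate $\Lambda$ sends the running label to zero, $A_{\vec{\Delta}}(\bm{k})=0$ and that term simply drops out of the sum — one may then assign $r_{\vec{\Delta}}(\bm{k})$ any value without affecting (\ref{theorem1}) — which is precisely the ``zero-probability outcome'' remark following (\ref{cor1resulteven}). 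I would also note that a product of generalized permutation matrices is again one, so $\bm{k}\mapsto r_{\vec{\Delta}}(\bm{k})$ is injective on the set where $A_{\vec{\Delta}}(\bm{k})\neq 0$ and no accidental cancellations occur between distinct terms; even without this the identity would hold as written, so the remark is for interpretational clarity only. The substantive work — producing the common $U$, $V$ and the permutation structure of the $\Lambda$'s, which ultimately rests on the commutation relation (\ref{commuteq}) — has already been discharged in Lemmas~1 and 2, leaving Theorem~1 as an essentially formal consequence.
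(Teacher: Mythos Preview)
Your proposal is correct and matches the paper's own proof essentially line for line: the paper simply inserts the resolution of the identity in the $V$-basis and then applies (\ref{cor1result}) and (\ref{cor1resulteven}) termwise, which is exactly your linearity argument. Your additional remarks on the well-definedness of $r_{\vec{\Delta}}$ when $A_{\vec{\Delta}}(\bm{k})=0$ and the injectivity of the permutation go slightly beyond what the paper spells out, but they are consistent with its treatment and do not alter the approach.
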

\begin{proof}
Inserting the resolution of the identity in the $ V$-basis
\begin{align}
T_{\vec{\Delta} }  | \phi \rangle  = T_{\vec{\Delta} } \sum_{\bm{k}}  |\bm{k} \rangle^{(V)}  \langle \bm{k} |^{(V)}  | \phi \rangle 
\end{align}
and using the results (\ref{cor1result}) and (\ref{cor1resulteven})  give  (\ref{theorem1}) and  (\ref{theorem1even}).  
\end{proof}

It follows straightforwardly from (\ref{theorem1}) that the probability of a particular sequence of projections is given by 
\begin{align}
p_{\vec{\Delta}} & = \langle \phi | T_{\vec{\Delta}}^\dagger 
T_{\vec{\Delta} } | \phi \rangle  \nonumber \\
& = \sum_{\bm{k}} | A_{\vec{\Delta}} ( \bm{k} ) \phi_{\bm{k}}^{(V)}|^2 .  
\end{align}
The states  (\ref{theorem1}) and  (\ref{theorem1even}) can then be normalized by simply dividing by $ \sqrt{p_{\vec{\Delta}}} $.  

Equations (\ref{theorem1})  and  (\ref{theorem1even}) allows for a highly efficient way to evaluate the result of a sequence of projections.  In words, the procedure is as follows.  Given an arbitrary initial state $ | \phi \rangle $, first expand the state in the $ V$-basis, defined by (\ref{initialstatevk1k2}).  Then for each state labeled by $ \bm{k} $, we can recursively apply  (\ref{recursivekrelation}) 
until we find $ \bm{k}_L $.  This results in obtaining the sequence of $ \bm{k}_l, \bm{k}_l' $, from which the overall coefficient $ A $ can be found through (\ref{atotalcoeff}).  Multiplying the coefficient by the associated term in the superposition gives (\ref{theorem1}).  This procedure is far more efficient than evaluating $ 2L + 1$ matrix multiplications directly, each with an overhead of $ (N+1)^6 $.

\subsection{Example: stochastic evolution}

\begin{figure}[t]%
\includegraphics[width=\linewidth]{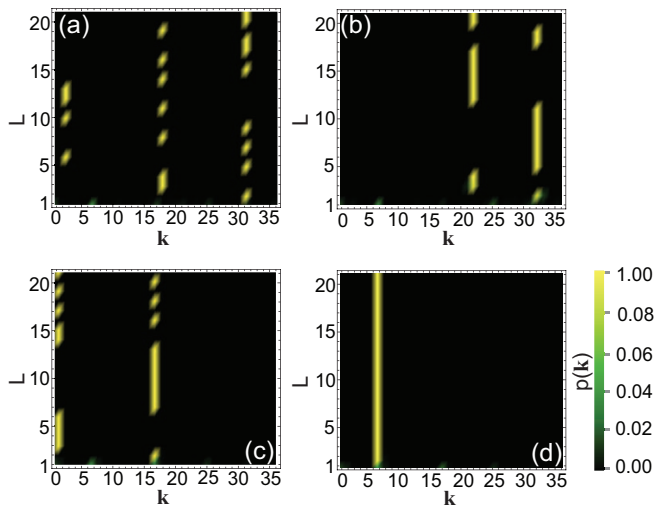}
\caption{The probability distribution (\ref{probpurestates}) of the quantum state after multiple QND projections (\ref{multitpsi}) for the initial  (a),(b) random state; (c),(d) $S_x$-polarized state. For all calculations number of atoms per BEC is taken as $ N = 5 $. States are enumerated by $ [\bm{k} ] = k_2 (N+1) + k_1$. }
\label{fig7}%
\end{figure}

While (\ref{theorem1}) and (\ref{theorem1even}) gives the general result for an arbitrary sequence of projections, for a typical sequence the resulting dynamics is often rather simple. 
Figure \ref{fig7} shows the probability distribution of evolving a random initial state with the sequence of QND projections 
\begin{align}
p( \bm{k}) & =  \frac{|\langle \bm{k} |{^{(V)}} T_{\vec{\Delta}_{2L+1} } | \phi \rangle|^2 }{p_{\vec{\Delta}_{2L+1} } } ,
\label{probpurestates}
\end{align}
where we consider an odd number of projections, such that the output state is in the $ V $-basis. Here the measurement outcomes $ \vec{\Delta}_{2L+1} $ are chosen randomly according to their measurement probabilities.  We see that the state quickly becomes dominated by a single state in the $ V$-basis $ | \bm{k} \rangle^{(V)} $.  Depending upon the measurement outcomes and the initial state, the state can remain fixed in a particular state (Fig. \ref{fig7}(d)), or jump randomly between several values (Fig. \ref{fig7}(a)-(c)).  This depends upon the recursion relations (\ref{recursivekrelation}).  However, at any given step in the projection sequence, the state is typically in only only of the $ | \bm{k} \rangle^{(V)} $ states.  

The reason for this can be seen from (\ref{theorem1}). After a sequence of projections is applied, the state becomes modified by the factor $  A_{\vec{\Delta}} ( \bm{k} ) $.  Since this is a product of matrix elements according to (\ref{atotalcoeff}), depending upon the particular $ \vec{\Delta} $ that occurs, one of the coefficients of $ | \bm{k} \rangle^{(V)} $ becomes dominant, due to the exponential increase of a dominant factor.  Similar results occur for an even number of projections, with the main difference being that the final state is in the $ U $-basis.  

We summarize the effect of multiple QND projections by the following basic rule-of-thumb: {\it it collapses the state to one of the states in the $ V$-basis for an odd number of projections, and one of the $ U$-basis states for an even number of projections}.

\section{Sequential QND projections: Mixed state analysis}
\label{sec:mixedstate}

In the previous section we considered the effect of applying a sequence of projections $ T_{\vec{\Delta}} $ to a pure state.  This results in a stochastic evolution of the state, as illustrated in Fig. \ref{fig7}. This corresponds to a single run of a particular experiment.  It is also useful to analyze this in a statistical sense, where we deal with a probabilistic evolution of the state.  The natural language for this are density matrices, and in this section we adapt the results of the previous section to the mixed state case.

\subsection{Sequence of QND projections on a mixed state}

Let us first generalize the results of the previous section to the case when the initial state is a mixed state with density matrix $ \rho_0$.  After performing a projection sequence the resulting unnormalized density matrix is
\begin{align}
 \tilde{\rho}_{\vec{\Delta}} = T_{\vec{\Delta}} \rho_0 T_{\vec{\Delta}}^\dagger
\end{align}
occurring with probability 
\begin{align}
p_{\vec{\Delta}} = \text{Tr} ( T_{\vec{\Delta}}^\dagger T_{\vec{\Delta}} \rho_0 )  .  
\end{align}
Instead of a single shot outcome as we considered in the last section, we now wish to obtain the average over many possible runs of the QND projector sequence.  Let us here restrict ourselves to the case that there is an odd number of projections $ \vec{\Delta} = \vec{\Delta}_{2L+1}$.  Averaging over all possible outcomes the density matrix is
\begin{align}
\rho^{(V)} & =\sum_{\vec{\Delta} } (\prod_{l=1}^L\Lambda_{\Delta_l^x\Delta_{l+1}^z}^{\dagger}\Lambda_{\Delta_l^x\Delta_l^z})\rho^{(V)}_0 (\prod_{l=1}^L\Lambda_{\Delta_l^x\Delta_{l+1}^z}^{\dagger}\Lambda_{\Delta_l^x\Delta_l^z})^{\dagger}
\label{avgrho}
\end{align}
where we used (\ref{multprojsimpl}) and we defined density matrices rotated into the $ V$-basis as
\begin{align}
    \rho^{(V)} = V^\dagger \rho V .  
\end{align}
Alternatively, we may use (\ref{theorem1}) to write 
\begin{align}
\rho^{(V)} & =\sum_{\vec{\Delta}} \sum_{\bm{k} \bm{k}'} 
A_{\vec{\Delta}} ( \bm{k}) A_{\vec{\Delta}}^* ( \bm{k} ') 
\rho_{\bm{k} \bm{k}'}^{(V)} | r_{\vec{\Delta}} (\bm{k}) \rangle \langle r_{\vec{\Delta}} (\bm{k}') |
\label{theorem1densitymat}
\end{align}
where $ \rho_{\bm{k} \bm{k}'}^{(V)} = \langle \bm{k} | V^\dagger \rho_0 V |  \bm{k}' \rangle $.

\subsection{Example: probabilistic evolution}

We now numerically evaluate the density matrix evolution to illustrate the effect of the multiple projections.  To evaluate the density matrix, rather than the expressions (\ref{avgrho}) or (\ref{theorem1densitymat}) 
which contain a large number of summations over $ \vec{\Delta} $, it is more efficient to evaluate the density matrices iteratively. 
Consider the application of an even number of projectors (\ref{multprojeven}) and averaging over all outcomes we have
\begin{align}
\rho_l' = \sum_{\Delta^z_1, \Delta^x_1, \dots, \Delta^x_l} P^{(x)}_{\Delta^x_l} \dots P^{(z)}_{\Delta^z_1} \rho_0 
P^{(z)}_{\Delta^x_1} \dots P^{(x)}_{\Delta^z_l} . 
\end{align}
For an odd number of projectors we have instead
\begin{align}
\rho_{l+1} = \sum_{\Delta^z_1, \Delta^x_1, \dots, \Delta^x_l,\Delta^z_{l+1} } P^{(z)}_{\Delta^z_{l+1}} P^{(x)}_{\Delta^x_l} \dots P^{(z)}_{\Delta^z_1} \rho_0 
P^{(z)}_{\Delta^x_1} \dots P^{(x)}_{\Delta^z_l}  P^{(z)}_{\Delta^z_{l+1}} . 
\end{align}
Then using  property (\ref{idemortho}) of projection operators, we may relate the two density matrices by
\begin{align}
\rho_{l+1} = \sum_{\Delta^x, \Delta^z } T^\dagger_{\Delta^z \Delta^x} \rho_l' T_{\Delta^z \Delta^x} . 
\label{rhodashtorho}
\end{align}
Similarly, we may transform a density matrix with an odd number of projectors to an even one according to
\begin{align}
\rho_{l}' = \sum_{\Delta^x, \Delta^z } T_{\Delta^z \Delta^x} \rho_l T^\dagger_{\Delta^z \Delta^x} . 
\label{rhotorhodash}
\end{align}

Applying (\ref{tsvdform}) we then find that
\begin{align}
{\rho_l'}^{(U)} & = \sum_{\Delta^x, \Delta^z } \Lambda_{\Delta^x\Delta^z}^\dagger {\rho_l}^{(V)} \Lambda_{\Delta^x\Delta^z}    \label{vtoudensitymat}  \\
    \rho_{l+1}^{(V)} & = \sum_{\Delta^x, \Delta^z } \Lambda_{\Delta^x\Delta^z} {\rho_l'}^{(U)} \Lambda_{\Delta^x\Delta^z}^\dagger   .
    \label{utovdensitymat}
\end{align}
where we defined
\begin{align}
\rho^{(U)} = U^\dagger \rho U .  
\end{align}
From (\ref{vtoudensitymat}) and (\ref{utovdensitymat}) we may find the resulting density matrix iteratively, by starting with the initial state $ \rho_0 = \rho_1 $.

%
%

Figure \ref{fig8} shows the density matrix evolution starting from two initial states, the state (\ref{initialstatexx}) and a random pure state. In Fig. \ref{fig8}(a)(b) we show the diagonal elements of the density matrix in the $ V $-basis as a function of the total number of rounds of projections $ L $.  We see that in both cases the states rapidly converge to a fixed probability distribution.  The particular distribution that is obtained, depends upon the initial state.  The final convergent probability distributions correspond to the proportions of the state that would be obtained after many runs of the stochastic evolution as shown in Fig. \ref{fig7}.

\begin{figure}[t]%
\includegraphics[width=\linewidth]{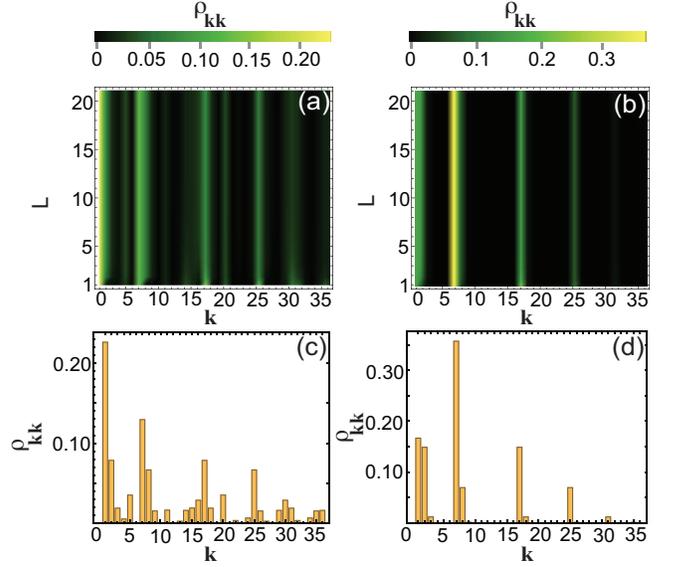}
\caption{The probability distribution of the quantum mixed state after multiple QND projections (\ref{multitpsi}) for the initial  (a) random state; (b) $S_x$-polarized state. For all calculations number of atoms per BEC is taken as $ N = 5 $. The corresponding probability bar charts are also shown at the bottom for each case. States are enumerated by $ [\bm{k} ] = k_2 (N+1) + k_1$.
\label{fig8} }
\end{figure}

\subsection{Steady-state distribution}

We now show a method for finding the steady-state distribution of the density matrix, as illustrated in Fig. \ref{fig8}.  After a large number of iterations  $ L $, the density matrix remains unchanged and reaches a fixed point.  We again consider the case where there is an odd number of projectors, such as the final density matrix is in the $ V$-basis.  Combining (\ref{rhodashtorho}) and (\ref{rhotorhodash}), we then demand that
\begin{align}
\rho = \sum_{\vec{\Delta}_3} T_{\vec{\Delta}_3} \rho T_{\vec{\Delta}_3}^\dagger
\label{steadystatecondition}
\end{align}
where $ \vec{\Delta}_3 = ( \Delta^z, \Delta^x, {\Delta^z}' )$. Expanding the density matrix in the $ V $-basis, we have
\begin{align}
\rho = \sum_{\bm{k}, \bm{k}'} \rho_{\bm{k} \bm{k}'}^{(V)} | \bm{k} \rangle^{(V)} \langle  \bm{k}'  |^{(V)}  .  
\end{align}
Using (\ref{cor1result}), the steady-state relation can be written
\begin{align}
\rho = \sum_{\vec{\Delta}_3} \sum_{\bm{k}, \bm{k}'}  A_{\vec{\Delta}_3} (\bm{k}) A_{\vec{\Delta}_3}^* (\bm{k}') \rho_{\bm{k} \bm{k}'}^{(V)}  
| r_{\vec{\Delta}_3} (\bm{k}) \rangle \langle r_{\vec{\Delta}_3} (\bm{k}') |^{(V)} .
\label{steadystaterel2}
\end{align}

We wish to obtain the distribution of the diagonal elements of the density matrix at steady state.  Define a vector consisting of the diagonal elements in the $ V $-basis
\begin{align}
d_{\bm{k}} = \langle \bm{k} | \rho | \bm{k} \rangle^{(V)}  .  
\end{align}
Taking the diagonal matrix elements of (\ref{steadystaterel2}) we obtain the relation
\begin{align}
d_{\bm{m} } = \sum_{\bm{k}} \sum_{\vec{\Delta}_3} | A_{\vec{\Delta}_3} (\bm{k}) |^2 d_{\bm{k}} , 
\label{funnymatrix}
\end{align}
where $\bm{m} = r_{\vec{\Delta}_3} (\bm{k})$. The sum over $ \bm{k}' $ in (\ref{steadystaterel2}) collapses to $\bm{k}'= \bm{k} $ since this is the only way to satisfy $ r_{\vec{\Delta}_3} (\bm{k}) = r_{\vec{\Delta}_3} (\bm{k}') $, where $ r $ is a permutation operation.  If we define a $ (N+1)^2 \times (N+1)^2  $ matrix  with elements
\begin{align}
{\cal A}_{\bm{m} \bm{k} } =  \sum_{\vec{\Delta}_3} | A_{\vec{\Delta}_3} (\bm{k}) |^2 
\end{align}
with  $\bm{m} = r_{\vec{\Delta}_3} (\bm{k})$ again, then (\ref{funnymatrix}) can be written as a matrix multiplication with elements
\begin{align}
d_{\bm{m} } = \sum_{\bm{k}} {\cal A}_{\bm{m} \bm{k} }  d_{\bm{k}} . 
\label{funnymatrix2}
\end{align}
This is an eigenvalue equation, where $d$ must be an eigenvector of $ \cal A$ with eigenvalue 1.  

We have numerically verified that the same probability distributions are obtained using (\ref{funnymatrix2}) as the iterative method as calculated in Fig. \ref{fig8}.

\begin{figure}[t]%
\includegraphics[width=\linewidth]{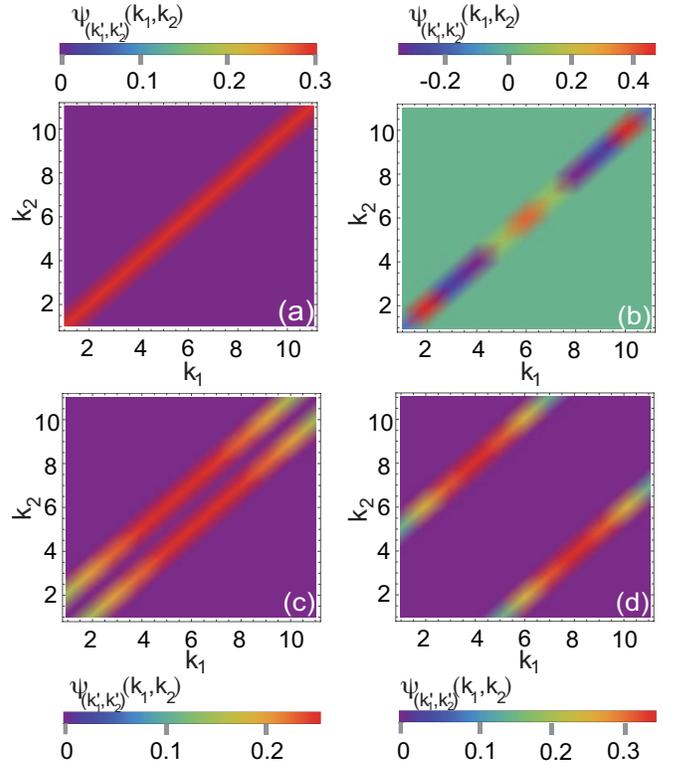}
\caption{The amplitudes of $ V$-basis states as defined by (\ref{vbasisstates}).  (a) $ k_1' = k_2' = 0 $ ($ \Delta = 0 $ sector);  (b) $ k_1' =0,  k_2' = 3 $ ($ \Delta = 0 $ sector); (c) $ k_1' = N + 1,  k_2' = 0 $ ($ \Delta = 1 $ sector); (d) $ k_1' = 7N -5,  k_2' = 0 $ ($ \Delta = 3 $ sector).  We use $ N = 10 $ for our calculations. 
\label{fig9} }
\end{figure}

\section{Properties of the $V$-basis states}
\label{vbasistates}

We have seen in Sec. \ref{sec:purestate} and Sec. \ref{sec:mixedstate} that the effect of a long sequence of projections $ T_{\vec{\Delta}} $ is to collapse an initial state onto one of the $ V $-basis states.  We examined a particular case in Sec. \ref{sec:maxentangle} where the state converges towards a maximally entangled state.  In this section, we examine the nature of the remaining $ V$-basis states.  We again limit our analysis to an odd number of projections.  
\begin{figure}[t]%
\includegraphics[width=\linewidth]{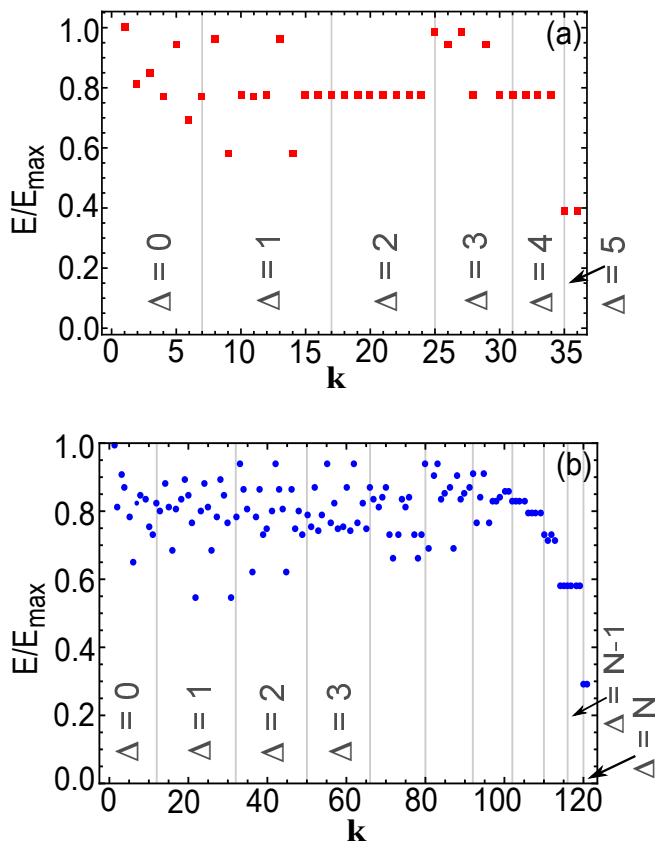}
\caption{Entanglement of the $ V$-basis states as defined by (\ref{entanglevonneumann}) for (a) $N = 5$, (b) $N = 10$.  States are enumerated by $ [\bm{k} ] = k_2 (N+1) + k_1$. Different states in each $\Delta$ sector are separated by vertical line. We have normalized with respect to the maximum entanglement between two BECs, $ E_\text{max} = \log_2 (N+1)$.
\label{fig10} }
\end{figure}
Figure \ref{fig9} shows a gallery of states in the $ V $-basis. We plot the coefficients
\begin{align}
    \psi_{k_1' k_2'} (k_1, k_2)   = \langle k_1, k_2 |  k_1', k_2' \rangle^{(V)}
    \label{vbasisstates}
\end{align}
for various choices of $  k_1', k_2' $. We find that all states are diagonally correlated in $ k_1, k_2$.  This is guaranteed from the fact that last projector $ P_{\Delta^z_{L+1}}^{(z)} $ in the sequence is in the $ z$-basis, and consists of correlated states according to (\ref{projzbasis}).  The $V$-basis states can be categorized into different sectors according to the $\Delta =  \Delta^z_{L+1} $ of the final  projection.  The number of states in each sector is 
\begin{align}
n(\Delta) =\frac{2(N+1-\Delta)}{2^{\delta_{\Delta}}}
\end{align}
which is the same as the rank of each projector in (\ref{projzbasis}).  
The states in each sector have different amplitude relations to ensure orthogonality (Fig. \ref{fig9}(a)(b)).

The diagonal nature of the states ensure that all the $ V$-basis states are entangled. The degree of entanglement however depends upon which of the states $ k_1',  k_2' $ is obtained.  Fig. \ref{fig10} shows the entanglement of the states quantified by the von Neumann entropy (\ref{entanglevonneumann}) that is normalized with respect to the maximum entanglement between two BECs.

The reduced density matrix on BEC 1 is 
\begin{align}
\rho = \text{Tr}_2 |  k_1', k_2' \rangle^{(V)}  \langle  k_1', k_2' |^{(V)} .  
\end{align}
From Fig. \ref{fig10}, we see that all states are entangled to various degrees. The maximally entangled state $ E = E_{\text{max}} $ corresponds to the EPR state (\ref{spineprz}).  The lowest entanglement is achieved in the $ \Delta = N $ sector, corresponding to $ E = 1$.  This corresponds to the states
\begin{align}
   \frac{1}{\sqrt{2}}(  | N, 0 \rangle \pm | 0, N \rangle )
\end{align}
which are NOON states. Since all the $ V$-basis states are entangled, the sequential projection will drive an arbitrary state into an entangled state.

\section{Summary and Conclusions}
\label{sec5}

We have developed a theory of entangling QND measurements between atomic ensembles based on measurement operators.  We first formulated the theory as developed in Ref. \cite{aristizabal2021quantum} in terms of POVMs. We then examined a limiting case of the theory with $ |\alpha \tau |^2 > 1 $ where two POVMs reduce to projection operators. This clarifies the way in which entanglement is generated in the system, but also reveals why non-maximal entanglement is generated with a single QND projection (see for e.g.  (\ref{projexampleonez})).  We then proceeded to analyze the multiple projection case, where a sequence of projections in alternating $ z $ and $ x$ bases are made.  Using a key mathematical observation (Lemma 1), the multiple projections can be simplified greatly.  This resulted in Theorem 1, where it was found that it is possible to evaluate an arbitrary sequence of projections very efficiently, using only a set of recursion relations of the state labels and associated amplitudes.  The same situation can be analyzed in a statistical sense using density matrices, where the convergent state can be obtained for a large number of projections.

The main difference between our results and past works is to firstly formulate the QND entangling operation as an exact POVM or projection operator.  Past methods rely on using various approximation to the spins (e.g. a Holstein-Primakoff approximation). Since our theory is based upon Ref. \cite{aristizabal2021quantum} which treats the spin dynamics exactly, 
the POVM formulation is an exact solution of the QND dynamics.  The projection 
operator is a particular limiting case of this theory, but  should be very accurate for the large photon regime that experiments are typically carried out in. Secondly, we have shown that a sequence of multiple QND measurements (i.e. the stroboscopic measurements) has a very simple mathematical structure that can be exploited.  This yields a method to obtain the state after a number of measurements in a very efficient way.  Normally, the projection sequence requires a calculation that scales as $ (2L+1) (N+1)^6  $, and there is little intuition to the procedure.  We have shown that there is a particular basis in which the states collapse to (the $ U $ and $ V$-basis), and makes clear the nature of the state that is obtained in the procedure. 

The projected sequence to prepare a maximally entangled state as shown in Sec. \ref{sec:maxentangle} is very attractive as a fundamental state preparation protocol for use in quantum information applications. It allows for a way of enhancing the spin correlations and can generate a highly entangled state of atomic ensembles.  A prime application is spinor quantum computing \cite{byrnes2012macroscopic,byrnes2015macroscopic}, where one deals with a many-body wavefunction to encode the logical states, and a regime away from the conventional
Holstein-Primakoff approximated states are used. Hence it is important to know 
the effect of such entangling operations such that they can be used for quantum information applications.  However, it should be pointed out that this is not deterministic, and represents merely one particular measurement outcome. This is similar to the one-ensemble counterpart as experimentally demonstrated in Ref. \cite{behbood2014} where post-selection was used to target the entangled singlet state.  Due to the large Hilbert space of atomic ensembles, a more desirable scheme would be deterministic preparation of a target entangled state. For the one-ensemble case, feedback approaches were examined to achieve this \cite{behbood2013feedback}. We leave investigation of such deterministic schemes as future work.

\begin{acknowledgments}
The authors thank Valerio Proietti for insightful discussions. This work is supported by the    	National Natural Science Foundation of China (NSFC Grant No. 62071301), State Council of the 	People’s Republic of China (Grant No. D1210036A), NSFC Research Fund for International Young Scientists (Grant No. 11850410426), NYU-ECNU Institute of Physics at NYU Shanghai, the 	Science and Technology Commission of Shanghai Municipality (Grant No. 19XD1423000), the China Science and Technology Exchange Center (Grant No. NGA-16-001), and the NYU Shanghai Boost Fund.
\end{acknowledgments}

\appendix

\section{Proof of Lemma 1}
\label{app:proof}

In this section we show that the commutative property (\ref{commuteq}) holds.  First note that for $ \Delta^z \ne {\Delta^z}' $
\begin{align}
R_{\Delta^z \Delta^x} R_{{\Delta^z}' {\Delta^x}'} = 
P_{\Delta^z}^{(z)}  P_{{\Delta^x}}^{(x)} P_{\Delta^z}^{(z)}  
P_{{\Delta^z}'}^{(z)}  P_{{\Delta^x}'}^{(x)} P_{{\Delta^z}'}^{(z)} = 0 
\end{align}
from the orthogonality of projection matrices.  Thus for $ \Delta^z \ne {\Delta^z}' $ the commutative relation (\ref{commuteq}) holds trivially.  Thus we only need to prove (\ref{commuteq}) for the $ \Delta^z = {\Delta^z}' $ case, and our task will be to prove the equivalent relation 
\begin{align}
P_{\Delta^z}^{(z)}  P_{{\Delta^x}'}^{(x)} P_{\Delta^z}^{(z)}  P_{{\Delta^x}''}^{(x)} P_{\Delta^z}^{(z)} {=} P_{\Delta^z}^{(z)} P_{{\Delta^x}''}^{(x)} P_{\Delta^z}^{(z)}  P_{{\Delta^x}'}^{(x)} P_{\Delta^x}^{(x)} .  
\label{thefiveps}
\end{align}

Next let us write the relation (\ref{thefiveps}) in an alternative form  by defining new projection operators 
\begin{align}
  \Pi_{\delta}^{(l)} = \sum_{k}|k,k+\delta\rangle \langle k,k+\delta|^{(l)} 
  \label{piprojdef}
\end{align}
where $ \delta $ can be a positive or negative integer, and $ l \in\{x,z \} $ indicates the basis.  Then our original projection operators (\ref{projzbasis}) are 
\begin{align}
 P_{\Delta}^{(l)} = \Pi_{\Delta}^{(l)} + \Pi_{-\Delta}^{(l)} = \sum_{\sigma=\pm 1}\Pi_{\sigma\Delta}^{(l)} .  
  \label{alternateprojzbasis}
\end{align}
The relation that we wish to prove (\ref{thefiveps}) can be rewritten in terms of the new projection operators as
\begin{align}
& \sum_{\sigma_1 \sigma_2 \sigma_3 \sigma_4  \sigma_5}
\Big( \Pi_{{\sigma_1 \Delta^z}}^{(z)}  \Pi_{{\sigma_2 \Delta^x}'}^{(x)} \Pi_{\sigma_3\Delta^z}^{(z)}  \Pi_{{\sigma_4 \Delta^x}''}^{(x)} \Pi_{{\sigma_5 \Delta^z}}^{(z)}  \nonumber \\
& - \Pi_{{\sigma_1 \Delta^z}}^{(z)}   \Pi_{{\sigma_4 \Delta^x}''}^{(x)} \Pi_{\sigma_3\Delta^z}^{(z)}  \Pi_{{\sigma_2 \Delta^x}'}^{(x)} \Pi_{{\sigma_5 \Delta^z}}^{(z)}   \Big) = 0 .  
\label{projequiv1}
\end{align}
it is possible to relabel the $ \sigma_i$ variables since they are dummy indices, and we have made a convenient choice for the purposes of our proof.  
We will show that the quantity inside parentheses is always zero, proving the desired relation. 

The key insight to proving the desired relation is to notice that the
projectors (\ref{piprojdef}) project the two BECs onto an eigenstate of $S^z_1-S^z_2$, such that the spin difference between the BECs is fixed. It will be convenient to transform the projectors such that instead of a spin difference, the projectors (\ref{piprojdef}) project on to the total spin $S^z_1+S^z_2$.  This can be achieved by flipping the second spin such that the transformed state is
\begin{align}
   e^{iS^y_2\frac{\pi}{2}} |k\rangle = (-1)^k|N-k\rangle . 
   \label{spinflip}
\end{align}
We can therefore define the rotated projection operator
\begin{align}
 \tilde{\Pi}_{\Delta}^{(z)}  & =  e^{iS^y_2\frac{\pi}{2}} \Pi_{\Delta}^{(z)} e^{-iS^y_2\frac{\pi}{2}}  \nonumber \\
 & = \sum_{k}|k,N-k-\Delta\rangle \langle k,N-k-\Delta|,
 \label{pirotation}
\end{align}
which consists of states with total spin $S^z_1 + S^z_2 = -2 \Delta $. 

Let us now change the notation of the Fock states such that they are instead interpreted as angular momentum eigenstates.  For a single BEC, we may write \cite{timquantumoptics2020}
\begin{align}
|k\rangle = \Big|j=\frac{N}{2}, m = k-\frac{N}{2}\Big\rangle
    \label{angmomentumbasis}
\end{align}
where `$j$' is the angular momentum quantum number and `$m$' is the quantum number associated with the spin along the $ z$-direction.  Then the rotated projectors can be written in this notation
\begin{align}
 \tilde{\Pi}_{\Delta}^{(z)} = \sum_{m} & |j_1=\frac{N}{2} ,m;j_2=\frac{N}{2} ,-m-\Delta\rangle \nonumber \\
  &  \langle j_1=\frac{N}{2},m;j_2=\frac{N}{2},-m-\Delta| .  
   \label{projnewbasis}
\end{align}

Now we may rewrite the states in (\ref{projnewbasis}) in terms of the total angular momentum basis  $|J,M\rangle$, consisting of two individual spins of equal angular momenta $ j_1 = j_2 = \frac{N}{2} $. The total spin operator is defined
\begin{align}
\bm{J} = \bm{j}_1 + \bm{j}_2 = \frac{\bm{S}_1 + \bm{S}_2}{2}  .  
\end{align}
The conversion between the bases is achieved by Clebsch-Gordan coefficients
\begin{align}
|j_1,m_1;j_2,m_2\rangle = & \sum_{J=|j_1 - j_2|}^{j_1 +j_2} | J, M = m_1 + m_2 \rangle  \nonumber \\
& \times \langle J, M = m_1 + m_2 | j_1,m_1;j_2,m_2\rangle .  
    \label{prodtocouple}
\end{align}
Substituting this into (\ref{projnewbasis}), we obtain
\begin{align}
      \tilde{\Pi}_{\Delta}^{(z)}  = \sum_{J}|J,-\Delta\rangle^{(z)} \langle J,-\Delta|^{(z)} . 
      \label{rotpiz}
\end{align}
Here we used the $ m$-sum unitarity relation for Clebsch-Gordan coefficients \cite{thompson2008angular},
%
%
\begin{align}
\sum_{m_1,m_2}& \langle J',M'|j_1,m_1;j_2,m_2\rangle\langle j_1,m_1;j_2,m_2|J,M\rangle
\nonumber \\
     & =\delta_{JJ'}\delta_{MM'} .  
     \label{unitarysum}
\end{align}
The sum in (\ref{rotpiz}) is over all total angular momentum states that have the same $ J^z $ eigenvalue.
Similarly, we have the projectors defined in the $ x $-basis
\begin{align}
      \tilde{\Pi}_{\Delta^x}^{(x)}
      & =  e^{-iJ^y\frac{\pi}{4}}   \tilde{\Pi}_{\Delta}^{(z)}   e^{iJ^y\frac{\pi}{4}}  \\ \nonumber
       & = \sum_{J}|J,-\Delta\rangle^{(x)} \langle J,-\Delta|^{(x)} 
\end{align}

Let us now evaluate the sequence of projectors
\begin{align}
& \tilde{\Pi}_{\Delta}^{(z)} \tilde{ \Pi}_{{\Delta_2}}^{(x)} \tilde{\Pi}_{\Delta_3}^{(z)}   \tilde{\Pi}_{{\Delta_4}}^{(x)} \tilde{\Pi}_{\Delta}^{(z)} 
 = \sum_J |J,-\Delta\rangle^{(z)} \langle J, -\Delta  |^{(z)} \nonumber \\
& \times d_{-\Delta, -\Delta_2}^J  
d_{ -\Delta_3,-\Delta_2}^J d_{ -\Delta_3,-\Delta_4}^J d_{ -\Delta,-\Delta_4}^J , 
\label{piseq1}
\end{align}
where we used the fact that a rotation of the angular momentum states preserves $ J $  \cite{thompson2008angular}
\begin{align}
    \text{}^{(z)}\langle J,M|J',M'\rangle^{(x)} &=\text{}^{(z)}\langle J,M|e^{-iJ^y \frac{\pi}{4}} |J',M'\rangle^{(z)} \nonumber \\ &= \delta_{J,J'}d^{J}_{M,M'}  ,  
\end{align}
and we defined
\begin{align}
  d^{J}_{M,M'} =   \text{}^{(z)}\langle J,M|J,M'\rangle^{(x)} .  
\end{align}
We may also evaluate the sequence with the ordering of the $ \Delta_2 $ and $ \Delta_4 $ interchanged
\begin{align}
& \tilde{\Pi}_{\Delta}^{(z)} \tilde{ \Pi}_{{\Delta_4}}^{(x)} \tilde{\Pi}_{\Delta_3}^{(z)}   \tilde{\Pi}_{{\Delta_2}}^{(x)} \tilde{\Pi}_{\Delta}^{(z)} 
 = \sum_J |J,-\Delta\rangle^{(z)} \langle J, -\Delta |^{(z)} \nonumber \\
& \times d_{-\Delta, -\Delta_4}^J  
d_{ -\Delta_3,-\Delta_4}^J d_{ -\Delta_3,-\Delta_2}^J d_{ -\Delta,-\Delta_2}^J .  
\label{piseq2}
\end{align}
The right hand sides of (\ref{piseq1}) and (\ref{piseq2}) are the same, hence we establish that 
\begin{align}
 \tilde{\Pi}_{\Delta}^{(z)} \tilde{ \Pi}_{{\Delta_2}}^{(x)} \tilde{\Pi}_{\Delta_3}^{(z)}   \tilde{\Pi}_{{\Delta_4}}^{(x)} \tilde{\Pi}_{\Delta}^{(z)}  = \tilde{\Pi}_{\Delta}^{(z)} \tilde{ \Pi}_{{\Delta_4}}^{(x)} \tilde{\Pi}_{\Delta_3}^{(z)}   \tilde{\Pi}_{{\Delta_2}}^{(x)} \tilde{\Pi}_{\Delta}^{(z)} .  
  \label{piresult1}
\end{align}

Now let us consider the sequence
\begin{align}
& \tilde{\Pi}_{\Delta}^{(z)} \tilde{ \Pi}_{{\Delta_2}}^{(x)} \tilde{\Pi}_{\Delta_3}^{(z)}   \tilde{\Pi}_{{\Delta_4}}^{(x)} \tilde{\Pi}_{-\Delta}^{(z)} 
 = \sum_J |J,-\Delta\rangle^{(z)} \langle J, \Delta  |^{(z)} \nonumber \\
& \times d_{-\Delta, -\Delta_2}^J  
d_{ -\Delta_3,-\Delta_2}^J d_{ -\Delta_3,-\Delta_4}^J d_{ \Delta,-\Delta_4}^J , 
\label{piseq3}
\end{align}
The sequence with $ \Delta_2 $ and $ \Delta_4 $ interchanged is 
\begin{align}
& \tilde{\Pi}_{\Delta}^{(z)} \tilde{ \Pi}_{{\Delta_4}}^{(x)} \tilde{\Pi}_{\Delta_3}^{(z)}   \tilde{\Pi}_{{\Delta_2}}^{(x)} \tilde{\Pi}_{-\Delta}^{(z)} 
 = \sum_J |J,-\Delta\rangle^{(z)} \langle J, \Delta  \rangle^{(z)} \nonumber \\
& \times d_{-\Delta, -\Delta_4}^J  
d_{ -\Delta_3,-\Delta_4}^J d_{ -\Delta_3,-\Delta_2}^J d_{ \Delta,-\Delta_2}^J .  
\label{piseq4}
\end{align}
Using the identity \cite{thompson2008angular}
\begin{align}
    \text{}^{(z)}\langle J,-M|J,M'\rangle^{(x)} &=  \text{}^{(z)}\langle J,M|J,M'\rangle^{(x)}  (-1)^{J + M}
\end{align}
we have
\begin{align}
d_{-M, M'}^J = (-1)^{J+M} d_{M, M'}^J .
\end{align}
Using this in relation we may equate the right hand sides of (\ref{piseq3}) and (\ref{piseq4}) and we establish that 
\begin{align}
 \tilde{\Pi}_{\Delta}^{(z)} \tilde{ \Pi}_{{\Delta_2}}^{(x)} \tilde{\Pi}_{\Delta_3}^{(z)}   \tilde{\Pi}_{{\Delta_4}}^{(x)} \tilde{\Pi}_{-\Delta}^{(z)}  = \tilde{\Pi}_{\Delta}^{(z)} \tilde{ \Pi}_{{\Delta_4}}^{(x)} \tilde{\Pi}_{\Delta_3}^{(z)}   \tilde{\Pi}_{{\Delta_2}}^{(x)} \tilde{\Pi}_{-\Delta}^{(z)} . 
 \label{piresult2}
\end{align}

We may now write the results (\ref{piresult1}) and (\ref{piresult2}) in an equivalent form, by inverting the transformation (\ref{pirotation}) and removing the tildes from the projectors
\begin{align}
    &\Pi_{\Delta}^{(z)}  \Pi_{{\Delta}''}^{(x)} \Pi_{\Delta}^{(z)}  \Pi_{{\Delta}'}^{(x)} \Pi_{\Delta}^{(z)}   = e^{-iS^y_2\frac{\pi}{2}} \tilde{\Pi}_{\Delta}^{(z)} \tilde{ \Pi}_{{\Delta}''}^{(x)} \tilde{\Pi}_{\Delta}^{(z)}  \tilde{\Pi}_{{\Delta}'}^{(x)} \tilde{\Pi}_{\Delta}^{(z)}  e^{iS^y_2\frac{\pi}{2}} .   \label{Eq:TransformedProjectorExpression}
\end{align}
Combining the two cases (\ref{piresult1}) and (\ref{piresult2}) into one relation we have 
\begin{align}
 \Pi_{\Delta}^{(z)} \Pi_{{\Delta_2}}^{(x)} \Pi_{\Delta_3}^{(z)}   \Pi_{{\Delta_4}}^{(x)} \Pi_{\pm \Delta}^{(z)}  = \Pi_{\Delta}^{(z)} 
\Pi_{{\Delta_4}}^{(x)} \Pi_{\Delta_3}^{(z)}   \Pi_{{\Delta_2}}^{(x)} \Pi_{\pm \Delta}^{(z)} . 
 \label{piresultall}
\end{align}

We may now prove the desired relation (\ref{projequiv1}). Setting $\Delta = \sigma_1 \Delta^z$, $ \Delta_2 = \sigma_2 {\Delta^x} ' $, $ \Delta_3 = \sigma_3 \Delta^z $, $ \Delta_4 = \sigma_4 {\Delta^x}'' $ and applying (\ref{piresultall}), for cases with $ \sigma_1 = \sigma_5$ the content of the brackets in (\ref{projequiv1}) is zero, according to the  $ + $ version of (\ref{piresultall}).  For cases with $ \sigma_1 = -\sigma_5$, we can apply the $ - $ version of (\ref{piresultall}), and the quantity inside the brackets in (\ref{projequiv1}) is again zero. This proves the desired relation.

\section{An example of Lemma 1}
\label{app:examplelemma1}

We provide an explicit example for $ N = 2$.  The rotation matrix (\ref{unitaryrotation}) can be evaluated using Eq. (5.162) and (5.163) in Ref. \cite{timquantumoptics2020}.  The rotation matrix is for this case
\begin{align}
{\cal U}(\frac{\pi}{2},0) = \frac{1}{4}\left(
\begin{array}{ccc}
 U_1 & U_{2} & U_{1} \\
 -U_{2} & U_{3} & U_{2} \\
 U_{1} & -U_{2}  & U_{1} \\
\end{array}
\right)
\end{align}
where we defined 
\begin{align}
U_{1} &= \left(
\begin{array}{ccc}
 1 & -\sqrt{2} & 1\\
 \sqrt{2} & 0 &  -\sqrt{2}\\
 1 & \sqrt{2} & 1 \\
 \end{array}
 \right) \\
 U_{2}  &=  \left(
\begin{array}{ccc}
 -\sqrt{2} & 2 & -\sqrt{2}\\
 -2 & 0 & 2\\
 -\sqrt{2} & -2 & -\sqrt{2} \\
 \end{array}
 \right)\\
 U_{3} & = \left(
\begin{array}{ccc}
 0 & 0 & 0\\
 0 & 0 & 0\\
 0 & 0 & 0 \\
 \end{array}
 \right)  .
\end{align}
This can be used to calculated the sequence of two projectors as in (\ref{tmatfortwo}).  For example, for $ \Delta^z = \Delta^x = 0 $
\begin{align}
T_{00} = \frac{1}{8}\left(
\begin{array}{ccccccccc}
 3 & 0 & 0 & 0 & 2 & 0 & 0 & 0 & 3 \\
 0 & 0 & 0 & 0 & 0 & 0 & 0 & 0 & 0 \\
 -1 & 0 & 0 & 0 & 2 & 0 & 0 & 0 & -1 \\
 0 & 0 & 0 & 0 & 0 & 0 & 0 & 0 & 0 \\
 2 & 0 & 0 & 0 & 4 & 0 & 0 & 0 & 2 \\
 0 & 0 & 0 & 0 & 0 & 0 & 0 & 0 & 0 \\
 -1 & 0 & 0 & 0 & 2 & 0 & 0 & 0 & -1 \\
 0 & 0 & 0 & 0 & 0 & 0 & 0 & 0 & 0 \\
 3 & 0 & 0 & 0 & 2 & 0 & 0 & 0 & 3 \\
\end{array}
\right)
\end{align}
and for $ \Delta^z = \Delta^x = 2 $
\begin{align}
T_{22} = \frac{1}{8}\left(
\begin{array}{ccccccccc}
 0 & 0 & 1 & 0 & 0 & 0 & 1 & 0 & 0 \\
 0 & 0 & 0 & 0 & 0 & 0 & 0 & 0 & 0 \\
 0 & 0 & 1 & 0 & 0 & 0 & 1 & 0 & 0 \\
 0 & 0 & 0 & 0 & 0 & 0 & 0 & 0 & 0 \\
 0 & 0 & -2 & 0 & 0 & 0 & -2 & 0 & 0 \\
 0 & 0 & 0 & 0 & 0 & 0 & 0 & 0 & 0 \\
 0 & 0 & 1 & 0 & 0 & 0 & 1 & 0 & 0 \\
 0 & 0 & 0 & 0 & 0 & 0 & 0 & 0 & 0 \\
 0 & 0 & 1 & 0 & 0 & 0 & 1 & 0 & 0 \\
\end{array}
\right) .  
\end{align}
The common unitary operators in (\ref{tsvdform}) are 
\begin{align}
U = \left(
\begin{array}{ccccccccc}
 \frac{1}{\sqrt{3}} & \frac{1}{2 \sqrt{6}} & -\frac{1}{\sqrt{2}} & \frac{1}{2 \sqrt{2}} & 0 & 0 & 0 & 0 & 0 \\
 0 & 0 & 0 & 0 & \frac{1}{2} & -\frac{1}{\sqrt{2}} & 0 & \frac{1}{2} & 0 \\
 0 & -\frac{\sqrt{\frac{3}{2}}}{2} & 0 & \frac{1}{2 \sqrt{2}} & 0 & 0 & 0 & 0 & -\frac{1}{\sqrt{2}} \\
 0 & 0 & 0 & 0 & \frac{1}{2} & 0 & -\frac{1}{\sqrt{2}} & -\frac{1}{2} & 0 \\
 \frac{1}{\sqrt{3}} & -\frac{1}{\sqrt{6}} & 0 & -\frac{1}{\sqrt{2}} & 0 & 0 & 0 & 0 & 0 \\
 0 & 0 & 0 & 0 & \frac{1}{2} & 0 & \frac{1}{\sqrt{2}} & -\frac{1}{2} & 0 \\
 0 & -\frac{\sqrt{\frac{3}{2}}}{2} & 0 & \frac{1}{2 \sqrt{2}} & 0 & 0 & 0 & 0 & \frac{1}{\sqrt{2}} \\
 0 & 0 & 0 & 0 & \frac{1}{2} & \frac{1}{\sqrt{2}} & 0 & \frac{1}{2} & 0 \\
 \frac{1}{\sqrt{3}} & \frac{1}{2 \sqrt{6}} & \frac{1}{\sqrt{2}} & \frac{1}{2 \sqrt{2}} & 0 & 0 & 0 & 0 & 0 \\
\end{array}
\right)
\end{align}
and 
\begin{align}
    V = \left(
\begin{array}{ccccccccc}
 \frac{1}{\sqrt{3}} & \frac{1}{\sqrt{6}} & -\frac{1}{\sqrt{2}} & 0 & 0 & 0 & 0 & 0 & 0 \\
 0 & 0 & 0 & \frac{1}{2} & -\frac{1}{\sqrt{2}} & 0 & \frac{1}{2} & 0 & 0 \\
 0 & 0 & 0 & 0 & 0 & 0 & 0 & \frac{1}{\sqrt{2}} & -\frac{1}{\sqrt{2}} \\
 0 & 0 & 0 & \frac{1}{2} & 0 & -\frac{1}{\sqrt{2}} & -\frac{1}{2} & 0 & 0 \\
 \frac{1}{\sqrt{3}} & -\sqrt{\frac{2}{3}} & 0 & 0 & 0 & 0 & 0 & 0 & 0 \\
 0 & 0 & 0 & \frac{1}{2} & 0 & \frac{1}{\sqrt{2}} & -\frac{1}{2} & 0 & 0 \\
 0 & 0 & 0 & 0 & 0 & 0 & 0 & \frac{1}{\sqrt{2}} & \frac{1}{\sqrt{2}} \\
 0 & 0 & 0 & \frac{1}{2} & \frac{1}{\sqrt{2}} & 0 & \frac{1}{2} & 0 & 0 \\
 \frac{1}{\sqrt{3}} & \frac{1}{\sqrt{6}} & \frac{1}{\sqrt{2}} & 0 & 0 & 0 & 0 & 0 & 0 \\
\end{array}
\right)    . 
\end{align}
The associated singular matrices are 
\begin{align}
\Lambda_{00} =\left(
\begin{array}{ccccccccc}
 1 & 0 & 0 & 0 & 0 & 0 & 0 & 0 & 0 \\
 0 & \frac{1}{2} & 0 & 0 & 0 & 0 & 0 & 0 & 0 \\
 0 & 0 & 0 & 0 & 0 & 0 & 0 & 0 & 0 \\
 0 & 0 & 0 & 0 & 0 & 0 & 0 & 0 & 0 \\
 0 & 0 & 0 & 0 & 0 & 0 & 0 & 0 & 0 \\
 0 & 0 & 0 & 0 & 0 & 0 & 0 & 0 & 0 \\
 0 & 0 & 0 & 0 & 0 & 0 & 0 & 0 & 0 \\
 0 & 0 & 0 & 0 & 0 & 0 & 0 & 0 & 0 \\
 0 & 0 & 0 & 0 & 0 & 0 & 0 & 0 & 0 \\
\end{array}
\right)
\end{align}
and
\begin{align}
\Lambda_{22} =\left(
\begin{array}{ccccccccc}
 0 & 0 & 0 & 0 & 0 & 0 & 0 & 0 & 0 \\
 0 & 0 & 0 & 0 & 0 & 0 & 0 & 0 & 0 \\
 0 & 0 & 0 & 0 & 0 & 0 & 0 & 0 & 0 \\
 0 & 0 & 0 & 0 & 0 & 0 & 0 & \frac{1}{2} & 0 \\
 0 & 0 & 0 & 0 & 0 & 0 & 0 & 0 & 0 \\
 0 & 0 & 0 & 0 & 0 & 0 & 0 & 0 & 0 \\
 0 & 0 & 0 & 0 & 0 & 0 & 0 & 0 & 0 \\
 0 & 0 & 0 & 0 & 0 & 0 & 0 & 0 & 0 \\
 0 & 0 & 0 & 0 & 0 & 0 & 0 & 0 & 0 \\
\end{array}
\right)  .  
\end{align}

\bibliography{proj}

\end{document}